\newtheorem{theorem}{Theorem}
\newtheorem{lemma}[theorem]{Lemma}
\newtheorem{definition}[theorem]{Definition}
\DeclareMathOperator*{\argmin}{arg\,min}
\newcommand{\BibTeX}{B\kern-.05em{\sc i\kern-.025em b}\kern-.08em\TeX}
\begin{document}


\begin{frontmatter}


\paperid{2455} 


\title{Last-Iterate Convergence in Adaptive Regret Minimization for Approximate Extensive-Form Perfect Equilibrium}


\author[A]{\fnms{Hang}~\snm{Ren}}
\author[A]{\fnms{Xiaozhen}~\snm{Sun}}
\author[A]{\fnms{Tianzi}~\snm{Ma}}
\author[A,B]{\fnms{Jiajia}~\snm{Zhang}}
\author[A,B]{\fnms{Xuan}~\snm{Wang}\thanks{Corresponding Author. Email: wangxuan@cs.hitsz.edu.cn.}}

\address[A]{School of Computer Science and Technology, Harbin Institute of Technology (Shenzhen), China}
\address[B]{Guangdong Provincial Key Laboratory of Novel Security Intelligence Technologies}


\begin{abstract}
The Nash Equilibrium (NE) assumes rational play in imperfect-information Extensive-Form Games (EFGs) but fails to ensure optimal strategies for off-equilibrium branches of the game tree, potentially leading to suboptimal outcomes in practical settings. To address this, the Extensive-Form Perfect Equilibrium (EFPE), a refinement of NE, introduces controlled perturbations to model potential player errors. However, existing EFPE-finding algorithms, which typically rely on average strategy convergence and fixed perturbations, face significant limitations: computing average strategies incurs high computational costs and approximation errors, while fixed perturbations create a trade-off between NE approximation accuracy and the convergence rate of NE refinements.

To tackle these challenges, we propose an efficient adaptive regret minimization algorithm for computing approximate EFPE, achieving last-iterate convergence in two-player zero-sum EFGs. Our approach introduces Reward Transformation Counterfactual Regret Minimization (RTCFR) to solve perturbed games and defines a novel metric, the Information Set Nash Equilibrium (ISNE), to dynamically adjust perturbations. Theoretical analysis confirms convergence to EFPE, and experimental results demonstrate that our method significantly outperforms state-of-the-art algorithms in both NE and EFPE-finding tasks.
\end{abstract}
\end{frontmatter}

\section{Introduction}

The Nash Equilibrium (NE) is a foundational solution concept for imperfect-information games, assuming perfectly rational players who assign zero probability to suboptimal actions, or ``mistakes''. However, this assumption leads to arbitrary strategies at unreached decision nodes, undermining NE's robustness in real-world scenarios where players may deviate from optimal play~\cite{selten1975reexamination}. Such deviations can significantly degrade NE strategy performance against irrational opponents.

To address this limitation, \citet{selten1975reexamination} introduced the concept of \emph{trembles} or \emph{perturbations}, refining Nash Equilibrium (NE) by requiring each action to be selected with a minimum probability \(\epsilon\). This leads to refined equilibria, such as the \emph{Extensive-Form Perfect Equilibrium (EFPE)}~\cite{selten1975reexamination} and the Quasi-Perfect Equilibrium (QPE)~\cite{van1984relation}. EFPE accounts for potential mistakes by all players, whereas QPE considers only opponents' future mistakes, excluding the player's own future errors. As \(\epsilon \to 0\), these refinements converge to exact equilibria.

Algorithms for computing NE refinements typically solve perturbed games by extending NE-finding methods with constraints enforcing minimum perturbation probabilities. Early approaches, such as the linear programming (LP) method proposed by \citet{miltersen2010computing}, incorporated additional constraints. More recent iterative algorithms~\cite{farina2017regret, kroer2017smoothing, farina2018practical} rely on average strategy convergence but often face high computational costs, increased memory requirements, or approximation errors~\cite{brown2019deep}.

Recent advances in last-iterate convergence techniques, including optimistic methods~\cite{lee2021last, wei2020linear} and regularization approaches like \emph{Reward Transformation (RT)}~\cite{perolat2021poincare, abe2022mutation, abe2022last, meng2023efficient}, offer promising directions for improvement. However, their application to NE refinement remains underexplored, both theoretically and practically.

A key limitation of existing methods is their reliance on fixed perturbation schemes throughout training, which results in a suboptimal trade-off between NE approximation accuracy and the convergence rate of NE refinements~\cite{farina2017regret, kroer2017smoothing}. Developing an efficient adaptive perturbation framework that dynamically adjusts perturbations based on the state of NE convergence remains an open challenge.

In this paper, we propose an efficient last-iterate convergence algorithm for computing EFPE in two-player zero-sum Extensive-Form Games (EFGs) with imperfect information. Building on behavioral constraints in perturbed games~\cite{farina2017regret} and RT-based regret minimization~\cite{meng2023efficient}, we introduce \emph{Reward Transformation Counterfactual Regret Minimization (RTCFR)}, a novel method for solving \(\epsilon\)-perturbed EFGs using last-iterate strategies. Our approach maps the original and perturbed strategy spaces via distinct basis matrices determined by the perturbation settings, and we prove its asymptotic convergence to \(\epsilon\)-EFPE.

To enhance efficiency, we propose the \emph{Information Set Nash Equilibrium (ISNE)} metric, which dynamically adjusts perturbations during iterations to balance NE approximation accuracy and EFPE convergence rate. Experimental results demonstrate that our approach significantly outperforms state-of-the-art algorithms in both NE and EFPE computation tasks.


\section{Related Work}\label{sec:related_work}

\subsection{Last-Iterate Convergence Algorithms}

Last-iterate convergence in saddle-point problems is primarily achieved through two approaches: optimistic methods and regularization techniques.

Optimistic methods, such as Optimistic Gradient Descent Ascent (OGDA) and Optimistic Multiplicative Weights Update (OMWU), have demonstrated last-iterate convergence in Normal-Form Games (NFGs)~\cite{wei2020linear} and Extensive-Form Games (EFGs)~\cite{lee2021last, anagnostides2022last, liu2022power}. However, these methods often require small hyperparameters to ensure theoretical guarantees, leading to slow practical convergence. Empirically increasing these parameters can degrade performance, as observed in games like Leduc Poker~\cite{lee2021last}.

Regularization-based methods incorporate regularization terms into the reward function via \emph{reward transformation (RT)}, which mitigates divergence and breaks Poincaré recurrence cycles in iterative strategy updates~\cite{mertikopoulos2018cycles, perolat2021poincare}. For instance, \citet{perolat2022mastering} applied RT in Deep Nash to tackle large-scale games like Stratego using deep learning. Building on this, \citet{abe2022last, abe2022mutation} integrated RT into Follow-The-Regularized-Leader (FTRL) and Multiplicative Weights Update (MWU) for NFGs, guiding strategies toward Nash equilibrium by quantifying deviations between reference and current strategies. \citet{meng2023efficient} extended RT to regret minimization, proposing Reward Transformation Regret Matching+ (RTRM+) and RTCFR+ for discrete game problems. However, they did not provide theoretical convergence guarantees for RTCFR+ in EFGs due to technical challenges.

\subsection{Nash Equilibrium Refinement Algorithms}

Computing Nash equilibrium (NE) refinements typically involves perturbing the game to enforce minimum action probabilities, followed by applying NE-finding algorithms. Linear Programming (LP)-based methods, such as those by \citet{miltersen2010computing}, compute Quasi-Perfect Equilibrium (QPE) by perturbing the realization plan. Subsequent work~\cite{farina2017extensive, farina2018practical} extended LP-based approaches to Extensive-Form Perfect Equilibrium (EFPE) and QPE, but these methods are limited to small games due to the exponential growth of strategy representations and constraints.

Iterative algorithms, such as the Excessive Gap Technique (EGT), Online Mirror Descent (OMD), and Counterfactual Regret Minimization (CFR), offer scalable alternatives for approximate NE refinements. \citet{farina2017regret} developed a CFR-based method that maps behavioral strategies to a perturbed space using affine transformations. \citet{kroer2017smoothing} proposed an EGT-based approach building on the realization plan framework of \citet{miltersen2010computing}. More recently, \citet{bernasconi2024learning} introduced an optimistic OMD-based method for EFPE with last-iterate convergence.


\section{Preliminaries}\label{sec:preliminaries}

\subsection{Mathematical Notational Conventions}
\begin{itemize}
    \item Vectors and matrices are denoted in bold (e.g., \(\boldsymbol{x}\), \(\boldsymbol{M}\)).
    \item \(\mathbb{R}\) represents the set of real numbers, and \(\mathbb{N} = \{1, 2, \dots\}\) denotes the positive integers.
    \item For a finite set \(S = \{s_1, \dots, s_n\}\), \(\mathbb{R}^S\) (resp., \(\mathbb{R}^S_{\geq 0}\)) denotes the set of real (resp., non-negative real) \(|S|\)-dimensional vectors with entries \(\boldsymbol{x}[s_1], \dots, \boldsymbol{x}[s_n]\). The simplex over \(S\) is \(\Delta^S := \{\boldsymbol{x} \in \mathbb{R}^S_{\geq 0} : \sum_{s \in S} \boldsymbol{x}[s] = 1\}\).
    \item For finite sets \(S, S'\), \(\mathbb{R}^{S \times S'}\) (resp., \(\mathbb{R}^{S \times S'}_{\geq 0}\)) denotes the set of real (resp., non-negative real) \(|S| \times |S'|\) matrices \(\boldsymbol{M}\) with entries \(\boldsymbol{M}[s_r, s_c]\) for \(s_r \in S\), \(s_c \in S'\), where \(s_r\) and \(s_c\) index rows and columns, respectively.
\end{itemize}

\subsection{Normal-Form and Extensive-Form Games}

\begin{definition}[Normal-Form Game]
    A two-player zero-sum \emph{Normal-Form Game (NFG)} is a tuple \(\Gamma = (A_1, A_2, u)\), where \(A_1\) and \(A_2\) are the action sets of players 1 and 2, respectively, and \(u: A_1 \times A_2 \to \mathbb{R}\) is the utility function for player 1. The utility for player 2 is \(-u\).
\end{definition}

Given action sets \(A_1 = \{a_{1,1}, \dots, a_{1,n}\}\) and \(A_2 = \{a_{2,1}, \dots, a_{2,m}\}\), the utility function \(u\) is represented by a matrix \(\boldsymbol{U} \in \mathbb{R}^{A_1 \times A_2}\) with entries \(\boldsymbol{U}[a_{1,i}, a_{2,j}] = u(a_{1,i}, a_{2,j})\). A strategy for player \(i \in \{1, 2\}\) is a probability distribution over actions, denoted \(\boldsymbol{x}_i \in \Delta^{A_i}\).

\begin{definition}[Bilinear Game]
    A two-player zero-sum \emph{Bilinear Game} is a tuple \(\Gamma = (\mathcal{X}_1, \mathcal{X}_2, \boldsymbol{U})\), where \(\mathcal{X}_1 \subseteq \mathbb{R}^n\) and \(\mathcal{X}_2 \subseteq \mathbb{R}^m\) are convex, compact strategy spaces for players 1 and 2, respectively, and \(\boldsymbol{U} \in \mathbb{R}^{n \times m}\) is the utility matrix. The expected utility for player 1 is a biaffine function \(u: \mathcal{X}_1 \times \mathcal{X}_2 \to \mathbb{R}\), given by \(u(\boldsymbol{x}_1, \boldsymbol{x}_2) = \boldsymbol{x}_1^\top \boldsymbol{U} \boldsymbol{x}_2\). The utility for player 2 is \(-u(\boldsymbol{x}_1, \boldsymbol{x}_2)\).
\end{definition}

An \emph{Extensive-Form Game (EFG)} models sequential decision-making via a game tree with decisions at information sets, defined as follows:

\begin{definition}[Extensive-Form Game]
    A two-player zero-sum \emph{Extensive-Form Game (EFG)} is a tuple \(\Gamma = (H, Z, A, P, \mathcal{I}_i, \boldsymbol{x}_i, u_i)\) for \(i \in \{1, 2\}\), where:
    \begin{itemize}
        \item \(H\): The set of states, including the initial state \(\emptyset\).
        \item \(Z \subset H\): The set of terminal states (leaf nodes).
        \item \(A(h)\): The set of actions available at non-terminal state \(h \in H \setminus Z\).
        \item \(P: H \setminus Z \to \{0, 1, 2\}\): A function assigning each non-terminal state to a player, where \(P(h) = 0\) denotes the chance player.
        \item \(\mathcal{I}_i\): The information partition for player \(i\), grouping indistinguishable non-terminal states \(h \in H\) with \(P(h) = i\). For any \(h, h' \in I \in \mathcal{I}_i\), \(A(h) = A(h')\), and we denote \(A(I) := A(h)\) for any \(h \in I\).
        \item \(\boldsymbol{x}_i \in \mathbb{R}_{\geq 0}^{\cup_{I \in \mathcal{I}_i} A(I)}\): The behavioral strategy for player \(i\), where \(\boldsymbol{x}_i(I) \in \Delta^{A(I)}\) is a probability distribution over actions at information set \(I \in \mathcal{I}_i\), and \(\boldsymbol{x}_i[Ia]\) denotes the probability of action \(a \in A(I)\).
        \item \(u_i: Z \to \mathbb{R}\): The utility function for player \(i\) at terminal state \(z \in Z\).
    \end{itemize}
\end{definition}
We assume \emph{perfect recall}, ensuring players retain all prior information. Hereafter, \emph{EFG} refers to a two-player zero-sum extensive-form game with imperfect information and perfect recall.

\subsection{Sequences and Sequence-Form Strategy Representation}

A \emph{sequence} \(\sigma = (I, a)\), denoted \(Ia\), represents action \(a \in A(I)\) at information set \(I \in \mathcal{I}_i\). The set of sequences for player \(i\) is \(\Sigma_i := \{\emptyset\} \cup \{(I, a) : I \in \mathcal{I}_i, a \in A(I)\}\), where \(\emptyset\) is the empty sequence. For an information set \(I \in \mathcal{I}_i\), the \emph{parent sequence} \(pI \in \Sigma_i\) is the last sequence on the path from the root to \(I\), or \(\emptyset\) if player \(i\) has not acted before \(I\). A partial order \(I'a' \prec Ia\) indicates that the path from the root to action \(a\) at \(I\) passes through action \(a'\) at \(I'\); \(I'a' \preceq Ia\) means \(I'a' \prec Ia\) or \(I'a' = Ia\).

A \emph{sequence-form strategy}~\cite{romanovskii1962reduction, koller1996efficient, von1996efficient} is a realization plan, represented by a vector \(\boldsymbol{q}_i \in \mathcal{Q}_i \subseteq \mathbb{R}_{\geq 0}^{\Sigma_i}\). Given a behavioral strategy \(\boldsymbol{x}_i \in \mathcal{X}_i \subseteq \mathbb{R}_{\geq 0}^{\Sigma_i}, \boldsymbol{x}_i[\emptyset]=1\), the sequence-form strategy is computed as:
\begin{equation}
    \boldsymbol{q}_i[Ia] = \prod_{I'a'\in \Sigma_i: I'a' \preceq Ia} \boldsymbol{x}_i[I'a'],
    \label{eq:q}
\end{equation}
with \(\boldsymbol{q}_i[\emptyset] = 1\). Conversely:
\begin{equation}
    \boldsymbol{x}_i[Ia] = \frac{\boldsymbol{q}_i[Ia]}{\boldsymbol{q}_i[pI]},
    \label{eq:x}
\end{equation}
where \(\boldsymbol{q}_i[pI] = \sum_{a \in A(I)} \boldsymbol{q}_i[Ia]\). Equations~\eqref{eq:q} and \eqref{eq:x} enable conversion between sequence-form and behavioral strategies.

This representation allows an EFG to be formulated as a bilinear game \(\Gamma = (\mathcal{Q}_1, \mathcal{Q}_2, \boldsymbol{U})\), expressed as a \emph{bilinear saddle-point problem (BSPP)} for the Nash equilibrium~\cite{koller1996efficient}:
\begin{equation}
    \max_{\boldsymbol{q}_1 \in \mathcal{Q}_1} \min_{\boldsymbol{q}_2 \in \mathcal{Q}_2} \boldsymbol{q}_1^\top \boldsymbol{U} \boldsymbol{q}_2 = \min_{\boldsymbol{q}_2 \in \mathcal{Q}_2} \max_{\boldsymbol{q}_1 \in \mathcal{Q}_1} \boldsymbol{q}_1^\top \boldsymbol{U} \boldsymbol{q}_2,
    \label{eq:bspp}
\end{equation}
where \(\boldsymbol{U} \in \mathbb{R}^{\Sigma_1 \times \Sigma_2}\) is a sparse utility matrix with non-zero entries \(\boldsymbol{U}[\sigma_1, \sigma_2] = q_0(z)u(z)\) for sequence pairs \((\sigma_1, \sigma_2) \in \Sigma_1 \times \Sigma_2\) reaching a leaf node \(z \in Z\), and \(q_0(z)\) is the chance probability along the path.

\subsection{Nash Equilibrium and Refinements}

\begin{definition}[Nash Equilibrium]
    In an EFG \(\Gamma = (\mathcal{Q}_1, \mathcal{Q}_2, \boldsymbol{U})\), a strategy profile \(\boldsymbol{q}^* = (\boldsymbol{q}_1^*, \boldsymbol{q}_2^*) \in \mathcal{Q}_1 \times \mathcal{Q}_2\) is a Nash equilibrium (NE) if:
    \[
        \boldsymbol{q}_1^* = \arg\max_{\boldsymbol{q}_1 \in \mathcal{Q}_1} \min_{\boldsymbol{q}_2 \in \mathcal{Q}_2} \boldsymbol{q}_1^\top \boldsymbol{U} \boldsymbol{q}_2, \quad \boldsymbol{q}_2^* = \arg\min_{\boldsymbol{q}_2 \in \mathcal{Q}_2} \max_{\boldsymbol{q}_1 \in \mathcal{Q}_1} \boldsymbol{q}_1^\top \boldsymbol{U} \boldsymbol{q}_2.
    \]
\end{definition}

An \emph{approximate Nash equilibrium} is a strategy profile close to an exact NE, quantified by \emph{exploitability}:

\begin{definition}[Exploitability]
    In an EFG \(\Gamma = (\mathcal{Q}_1, \mathcal{Q}_2, \boldsymbol{U})\), the exploitability of a strategy profile \(\boldsymbol{q} = (\boldsymbol{q}_1, \boldsymbol{q}_2)\) is:
    \begin{equation}
        \text{Exp}(\boldsymbol{q}) = \max_{\boldsymbol{q}_1' \in \mathcal{Q}_1} \boldsymbol{q}_1'^\top \boldsymbol{U} \boldsymbol{q}_2 - \min_{\boldsymbol{q}_2' \in \mathcal{Q}_2} \boldsymbol{q}_1^\top \boldsymbol{U} \boldsymbol{q}_2'.
        \label{eq:exp}
    \end{equation}
\end{definition}

The \emph{Extensive-Form Perfect Equilibrium (EFPE)} is defined in a perturbed game where each action has a minimum probability \(\epsilon\). For any information set \(I \in \mathcal{I}_i\), the behavioral strategy satisfies:
\begin{equation}
    \sum_{a \in A(I)} \boldsymbol{x}_i[Ia] = 1, \quad \boldsymbol{x}_i[Ia] \geq \epsilon.
    \label{eq:behavior_constraint}
\end{equation}
In the sequence-form polytope \(\mathcal{Q}^\epsilon_i \subseteq \mathbb{R}_{\geq \boldsymbol{l}}^{\Sigma_i}\), the lower bound is \(\boldsymbol{l}[Ia] = \epsilon^{|Ia|}\), ensuring each action on the path is taken with at least probability \(\epsilon\).

\begin{definition}[\(\epsilon\)-Extensive-Form Perfect Equilibrium]
    In a perturbed EFG \(\Gamma^\epsilon = (\mathcal{Q}_1^\epsilon, \mathcal{Q}_2^\epsilon, \boldsymbol{U})\), a strategy profile \((\boldsymbol{q}_1^\epsilon, \boldsymbol{q}_2^\epsilon) \in \mathcal{Q}_1^\epsilon \times \mathcal{Q}_2^\epsilon\) is an \(\epsilon\)-EFPE if it is a Nash equilibrium of \(\Gamma^\epsilon\).
    \label{def:epsilon-EFPE}
\end{definition}

\begin{definition}[Extensive-Form Perfect Equilibrium]
    In an EFG \(\Gamma = (\mathcal{Q}_1, \mathcal{Q}_2, \boldsymbol{U})\), a strategy profile \((\boldsymbol{q}_1, \boldsymbol{q}_2) \in \mathcal{Q}_1 \times \mathcal{Q}_2\) is an EFPE if it is the limit of a sequence \(\{(\boldsymbol{q}_1^\epsilon, \boldsymbol{q}_2^\epsilon)\}_{\epsilon \to 0}\), where \((\boldsymbol{q}_1^\epsilon, \boldsymbol{q}_2^\epsilon)\) is an \(\epsilon\)-EFPE of \(\Gamma^\epsilon\).
    \label{def:EFPE}
\end{definition}

\subsection{Regret Matching and Counterfactual Regret Minimization}

Regret Matching (RM)~\cite{hart2000simple} and Regret Matching+ (RM+) are algorithms for computing Nash equilibria in NFGs. For player \(i \in \{1, 2\}\), each iteration computes:
\begin{align}
    \boldsymbol{v}_i^t &= \boldsymbol{U}_i \boldsymbol{x}_{-i}^t, \label{eq:v} \\
    \boldsymbol{r}_i^t &= \boldsymbol{v}_i^t - \langle \boldsymbol{v}_i^t, \boldsymbol{x}_i^t \rangle \mathbf{1}, \label{eq:r} \\
    \boldsymbol{R}_i^t &= \begin{cases}
        \boldsymbol{R}_i^{t-1} + \boldsymbol{r}_i^t, & \text{if RM}, \\
        [\boldsymbol{R}_i^{t-1} + \boldsymbol{r}_i^t]^+, & \text{if RM+},
    \end{cases} \label{eq:R} \\
    \boldsymbol{x}_i^{t+1} &= \begin{cases}
        \frac{[\boldsymbol{R}_i^t]^+}{\|[\boldsymbol{R}_i^t]^+\|_1}, & \text{if } \|[\boldsymbol{R}_i^t]^+\|_1 > 0, \\
        \frac{1}{|A_i|}, & \text{otherwise},
    \end{cases} \label{eq:x_rm}
\end{align}
where \(-i\) denotes the opponent, and \([\cdot]^+ = \max(\boldsymbol{0}, \cdot)\) takes the non-negative part of the vector.

Counterfactual Regret Minimization (CFR)~\cite{zinkevich2007regret} and CFR+~\cite{tammelin2014solving} extend RM and RM+ to EFGs by applying regret minimization independently at each information set. The counterfactual value at information set \(I \in \mathcal{I}_i\) for action \(a \in A(I)\) is computed recursively:
\begin{equation}
    \boldsymbol{v}_i^t[Ia] = \langle \boldsymbol{U}_i, \boldsymbol{q}_{-i}^t \rangle[Ia] + \sum_{I'a' \in \Sigma_i : pI' = Ia} \boldsymbol{x}_i^t[I'a'] \boldsymbol{v}_i^t[I'a'].
    \label{eq:v_cfr}
\end{equation}
CFR converges to a Nash equilibrium by Blackwell’s approachability theorem, with a worst-case regret bound of \(O(\sqrt{T})\) after \(T\) iterations~\cite{zinkevich2007regret}, though practical convergence is often faster.

\section{RTCFR for Solving EFPE in Perturbed Games}\label{sec:RTCFR_in_perturbed_game}

This section introduces an efficient regret minimization framework for computing \(\epsilon\)-Extensive-Form Perfect Equilibria (EFPE) in perturbed EFGs. Building on the work of \citet{farina2017regret}, we extend the Reward Transformation (RT) framework proposed by \citet{meng2023efficient} to develop \emph{Reward Transformation Counterfactual Regret Minimization (RTCFR)}, a novel method for approximating \(\epsilon\)-EFPE with guaranteed asymptotic last-iterate convergence.

\subsection{Regret Minimization in Perturbed Normal-Form Games}\label{subsec:perturbed_nfg}

In regret minimization (RM) dynamics, as defined in Equations~\eqref{eq:v}--\eqref{eq:x_rm}, the strategy space is typically a standard simplex (see Figure~\ref{fig:basis}, left). However, incorporating perturbation constraints, as specified in Equation~\eqref{eq:behavior_constraint}, introduces significant challenges. To address this, we adopt the approach of \citet{farina2017regret}, which preserves RM dynamics by mapping strategies to the perturbed space using a non-standard basis.

\begin{figure}[h]
    \centering
    \includegraphics[width=0.9\linewidth]{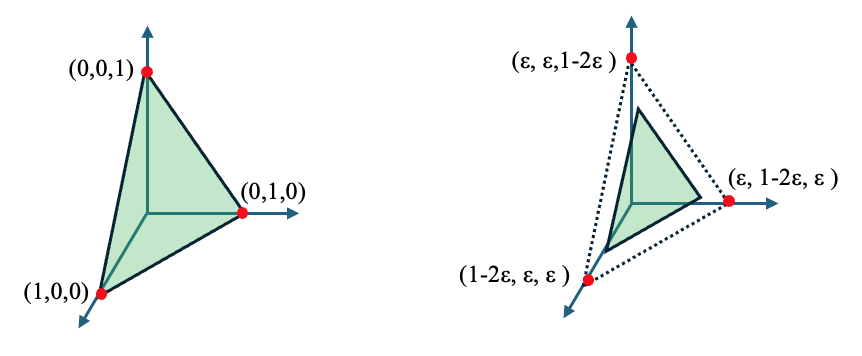}
    \caption{Unperturbed strategy polytope (left, standard simplex) and perturbed polytope (right) in a three-dimensional simplex.}
    \label{fig:basis}
    \vspace{0.5cm}
\end{figure}

Consider a NFG with a strategy space defined by an \(n\)-dimensional simplex \(\Delta^n\), with vertices \((\boldsymbol{b}_1, \dots, \boldsymbol{b}_n)\) forming the basis matrix \(\boldsymbol{B} = (\boldsymbol{b}_1 | \dots | \boldsymbol{b}_n)\). The strategy is computed as \(\boldsymbol{x} = \boldsymbol{B} \hat{\boldsymbol{x}}\), where \(\hat{\boldsymbol{x}} \in \Delta^n\) represents the coordinates. In an unperturbed game \(\Gamma\), the strategy space is a standard \(n\)-simplex, with each vertex \(\boldsymbol{b}_j^\top = (0, \dots, 1, \dots, 0)\), having a single non-zero entry at index \(j\) (Figure~\ref{fig:basis}, left), yielding \(\boldsymbol{x} = \hat{\boldsymbol{x}}\). For the perturbed game \(\Gamma^\epsilon\), the strategy space is a non-standard simplex \(\Delta_{\geq \epsilon}^n\), with \(\epsilon\)-perturbed vertices defined as:
\begin{equation}
    \boldsymbol{b}_j^\top = (\epsilon, \dots, \tau + \epsilon, \dots, \epsilon),
    \label{eq:basis}
\end{equation}
where the \(j\)-th entry is \(\tau + \epsilon\), and \(\tau = 1 - n\epsilon\) (Figure~\ref{fig:basis}, right).

Given the affine nature of the utility function, we have:
\begin{align}
    u_i(\boldsymbol{x}_i, \boldsymbol{x}_{-i}) &= \boldsymbol{x}_i^\top \boldsymbol{U}_i \boldsymbol{x}_{-i} = \langle \boldsymbol{B}_i \hat{\boldsymbol{x}}_i, \boldsymbol{U}_i \boldsymbol{B}_{-i} \hat{\boldsymbol{x}}_{-i} \rangle.
\end{align}
Define \(\hat{\boldsymbol{U}}_i = \boldsymbol{B}_i^\top \boldsymbol{U}_i \boldsymbol{B}_{-i}\), so that:
\begin{equation}
    u_i(\boldsymbol{x}_i, \boldsymbol{x}_{-i}) = \hat{\boldsymbol{x}}_i^\top \hat{\boldsymbol{U}}_i \hat{\boldsymbol{x}}_{-i} = \hat{u}_i(\hat{\boldsymbol{x}}_i, \hat{\boldsymbol{x}}_{-i}).
\end{equation}
This allows us to construct an equivalent transformed game \(\hat{\Gamma} = (\Delta^n, \Delta^m, \hat{\boldsymbol{U}})\) from \(\Gamma^\epsilon\), with the utility for the \(j\)-th vertex given by:
\begin{equation}
    \hat{\boldsymbol{v}}_i[j] = \boldsymbol{b}_j^\top \boldsymbol{v}_i,
\end{equation}
where \(\boldsymbol{v}_i = \boldsymbol{U}_i \boldsymbol{x}_{-i}\) for player \(i \in \{1, 2\}\). This ensures equivalence between the optimization objectives in \(\Gamma^\epsilon\) and \(\hat{\Gamma}\), as defined in Equation~\eqref{eq:bspp}.

The RM dynamics for the perturbed game \(\Gamma^\epsilon\) are:
\begin{align}
    \boldsymbol{v}_i^t &= \boldsymbol{U}_i \boldsymbol{x}_{-i}^t, \label{eq:p_v} \\
    \hat{\boldsymbol{v}}_i^t &= \boldsymbol{B}_i^\top \boldsymbol{v}_i^t, \label{eq:p_v'} \\
    \boldsymbol{r}_i^t &= \hat{\boldsymbol{v}}_i^t - \langle \hat{\boldsymbol{v}}_i^t, \hat{\boldsymbol{x}}_i^t \rangle \mathbf{1}, \label{eq:p_r} \\
    \boldsymbol{R}_i^t &= \begin{cases}
        \boldsymbol{R}_i^{t-1} + \boldsymbol{r}_i^t, & \text{if RM}, \\
        [\boldsymbol{R}_i^{t-1} + \boldsymbol{r}_i^t]^+, & \text{if RM+},
    \end{cases} \label{eq:p_R} \\
    \hat{\boldsymbol{x}}_i^{t+1} &= \begin{cases}
        \frac{[\boldsymbol{R}_i^t]^+}{\|[\boldsymbol{R}_i^t]^+\|_1}, & \text{if } \|[\boldsymbol{R}_i^t]^+\|_1 > 0, \\
        \frac{1}{|A_i|}, & \text{otherwise},
    \end{cases} \label{eq:p_x'} \\
    \boldsymbol{x}_i^{t+1} &= \boldsymbol{B}_i \hat{\boldsymbol{x}}_i^{t+1}. \label{eq:p_x}
\end{align}
Equation~\eqref{eq:p_v'} computes the perturbed action values in \(\hat{\Gamma}\), ensuring that the expected utility is invariant, i.e., \(\langle \hat{\boldsymbol{v}}_i^t, \hat{\boldsymbol{x}}_i^t \rangle = \langle \boldsymbol{v}_i^t, \boldsymbol{x}_i^t \rangle\) in Equation~\eqref{eq:p_r}. Equation~\eqref{eq:p_x} maps the strategy back to the original space, preserving RM dynamics under perturbation constraints.

\subsection{RTCFR for Perturbed Extensive-Form Games}\label{subsec:RTCFR_perturbed_EFG}

The Reward Transformation (RT) framework \cite{perolat2021poincare} achieves asymptotic last-iterate convergence by incorporating a regularization term, specifically the Euclidean distance between a reference strategy \(\boldsymbol{x}_i^r\) and the current strategy \(\boldsymbol{x}_i\) \cite{meng2023efficient, abe2022mutation}. This defines the \emph{Reward Transformation Bilinear Saddle-Point Problem (RT-BSPP)}:
\begin{equation}
    \max_{\boldsymbol{x}_1 \in \mathcal{X}_1} \min_{\boldsymbol{x}_2 \in \mathcal{X}_2} \boldsymbol{x}_1^\top \boldsymbol{U} \boldsymbol{x}_2 + \frac{\mu}{2} \|\boldsymbol{x}_1^r - \boldsymbol{x}_1\|_2^2 - \frac{\mu}{2} \|\boldsymbol{x}_2^r - \boldsymbol{x}_2\|_2^2,
    \label{eq:rt_bspp}
\end{equation}
where \(\mu > 0\) is the RT weight. For a reference strategy \(\boldsymbol{x}^r\), the RT-BSPP yields a saddle point \(\boldsymbol{x}^{*,r}\), which is biased relative to the NE \(\boldsymbol{x}^*\) of Equation~\eqref{eq:bspp}:
\begin{equation}
    \|\boldsymbol{x}^{*,r} - \boldsymbol{x}^*\|_2 \propto \mu \|\boldsymbol{x}^r - \boldsymbol{x}^*\|_2,
    \label{eq:reference_condition}
\end{equation}
as established by \citet{perolat2021poincare}. The RT framework ensures the convergence relation:
\begin{equation}
    \boldsymbol{x}^t \to \boldsymbol{x}^{*,r} \to \boldsymbol{x}^*.
    \label{eq:relation}
\end{equation}
By adaptively updating \(\boldsymbol{x}^r\) during training, \(\boldsymbol{x}^{*,r}\) converges to \(\boldsymbol{x}^*\), enabling asymptotic convergence to the NE.

We extend the RT framework to RM dynamics (Equations~\eqref{eq:p_v}--\eqref{eq:p_x}), including RM, RM+, and discounted RM \cite{brown2019solving}, collectively termed \emph{Reward Transformation Regret Minimization (RTRM)}. The value update in Equation~\eqref{eq:p_v} is modified as:
\begin{equation}
    \boldsymbol{v}_i^t = \boldsymbol{U}_i \boldsymbol{x}_{-i}^t + \mu (\boldsymbol{x}_i^r - \boldsymbol{x}_i^t).
    \label{eq:p_v_rtrm}
\end{equation}
To adaptively update \(\boldsymbol{x}^r\), we divide \(NT\) training iterations into \(N\) RT-BSPPs, each with \(T\) iterations. The final state of the \((n-1)\)-th RT-BSPP initializes the \(n\)-th:
\begin{equation}
    \boldsymbol{x}^{1,n} \gets \boldsymbol{x}^{T+1,n-1}, \quad \boldsymbol{R}^{1,n} \gets \boldsymbol{R}^{T,n-1}, \quad \boldsymbol{x}^{r,n} \gets \boldsymbol{x}^{T+1,n-1}.
    \label{eq:rt_init}
\end{equation}

We prove that RTRM satisfies Equation~\eqref{eq:relation}, achieving best-iterate convergence (\(\boldsymbol{x}^{t,n} \to \boldsymbol{x}^{*,n}\), where \(\boldsymbol{x}^{*,n} := \boldsymbol{x}^{*,r,n}\)) for each \(n\)-th RT-BSPP. Detailed proofs are provided in the Appendix.

\begin{theorem}[Best-Iterate Convergence of RTRM]\label{thm:best_iterate_converge_of_RTRM}
    Given a reference strategy \(\boldsymbol{x}^{r,n}\), RT weight \(\mu\), and perturbation \(\epsilon\), let \(\{\boldsymbol{x}^{t,n}\}\) be the strategy sequence produced by RTRM in the \(\epsilon\)-perturbed \(n\)-th RT-BSPP, with saddle point \(\boldsymbol{x}^{*,n} \in \mathcal{X}^{*,n}\). For any \(T \geq 1\), there exists \(t \in \{1, \dots, T\}\) such that:
    \[
        \|\boldsymbol{x}^{*,n} - \boldsymbol{x}^{t,n}\|_2 \leq O\left(\frac{1}{\sqrt{T}}\right).
    \]
\end{theorem}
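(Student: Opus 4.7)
The plan is to recast the $n$-th RT-BSPP as a strongly monotone saddle-point problem via the basis transformation of Section~4.1, combine the standard $O(\sqrt{T})$ regret bound of RM/RM+ with strong monotonicity to control the cumulative distance to the saddle point, and extract a good iterate by pigeonhole.

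First, I would work in the basis-transformed game $\hat{\Gamma}$, where the strategy spaces are standard simplices and the RTRM value vectors $\hat{\boldsymbol{v}}_i^t = \boldsymbol{B}_i^\top[\boldsymbol{U}_i \boldsymbol{x}_{-i}^t + \mu(\boldsymbol{x}_i^{r,n} - \boldsymbol{x}_i^t)]$ coincide with the pulled-back gradients of each player's RT-regularized objective. A direct inner-product computation shows that the associated saddle-point operator splits into an antisymmetric bilinear component plus a $\mu$-scaled identity, so it is $\mu$-strongly monotone on $\mathcal{X}_1^\epsilon \times \mathcal{X}_2^\epsilon$; in particular, the saddle point $\boldsymbol{x}^{*,n}$ is unique.

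Second, the classical Blackwell-approachability analysis of RM/RM+ applied to the sequence $\hat{\boldsymbol{v}}_i^t$ yields an external regret bound $R_i^T \leq D_i\sqrt{T}$, where $D_i$ depends on $\|\boldsymbol{U}\|$, $\mu$, and $\|\boldsymbol{B}_i\|_{op}$. Specializing the regret inequality to the comparator $\hat{\boldsymbol{x}}_i^{*,n}$, summing across both players, and applying the $\mu$-strong-monotonicity lower bound $\sum_i \langle \hat{\boldsymbol{v}}_i^t,\ \hat{\boldsymbol{x}}_i^{*,n} - \hat{\boldsymbol{x}}_i^t \rangle \geq \mu\|\hat{\boldsymbol{x}}^t - \hat{\boldsymbol{x}}^{*,n}\|_2^2$ pointwise produces the cumulative inequality
\[
\sum_{t=1}^T \|\hat{\boldsymbol{x}}^t - \hat{\boldsymbol{x}}^{*,n}\|_2^2 \;\leq\; \frac{R_1^T + R_2^T}{\mu}.
\]
Pulling back via the bounded basis $\|\boldsymbol{B}_i\|_{op}$ transfers this bound, up to a constant factor, to the original perturbed space, and pigeonhole then selects $t^\star$ with $\|\boldsymbol{x}^{t^\star,n} - \boldsymbol{x}^{*,n}\|_2 = O(T^{-1/4})$.

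The main obstacle is sharpening this naive pigeonhole rate from $O(T^{-1/4})$ to the $O(T^{-1/2})$ asserted in the theorem. I expect this requires exploiting the contractive $-\mu\boldsymbol{x}_i^t$ term inside the RT value as a restoring force toward $\hat{\boldsymbol{x}}^{*,n}$: combining the regret accumulator $\boldsymbol{R}_i^t$ with the squared distance $\|\hat{\boldsymbol{x}}^t - \hat{\boldsymbol{x}}^{*,n}\|_2^2$ into a single Lyapunov function $\Phi^t$ should yield a one-step decrease $\Phi^{t+1} \leq \Phi^t - c\|\hat{\boldsymbol{x}}^t - \hat{\boldsymbol{x}}^{*,n}\|_2^2 + b^t$ whose residual satisfies $\sum_t b^t = O(1)$ instead of $O(\sqrt T)$. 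Summing then bounds $\sum_t \|\hat{\boldsymbol{x}}^t - \hat{\boldsymbol{x}}^{*,n}\|_2^2$ by a time-independent constant, after which pigeonhole delivers the claimed rate. Verifying this Lyapunov decrease inside the non-smooth RM/RM+ projection dynamics, while simultaneously tracking the interaction of the basis $\boldsymbol{B}_i$ with both the inner product and the regularizer, is the core technical step, and parallels the argument of \citet{meng2023efficient} for RTRM+ in normal-form games.
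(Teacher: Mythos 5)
Your high-level route is the same as the paper's: pass to the basis-transformed game, exploit the $\mu$-strong monotonicity injected by the RT term, establish that the \emph{cumulative} squared distance $\sum_{t=1}^T \|\hat{\boldsymbol{x}}^{t}-\hat{\boldsymbol{x}}^{*,n}\|_2^2$ is bounded by a $T$-independent constant, extract a good iterate by pigeonhole, and pull back through $\boldsymbol{B}$. The difference is that you leave the decisive step --- the Lyapunov decrease with $\sum_t b^t = O(1)$ --- as a conjecture, and this is exactly the step the paper does not reprove but imports: by the RM$\leftrightarrow$OMD equivalence (Lemma~\ref{lem:omd_equivalence}, from \citet{farina2021faster}), the entire RTRM update is a GDA step in the lifted regret-accumulator space $\boldsymbol{\theta}$ with $\hat{\boldsymbol{x}}^t = \boldsymbol{\theta}^t/\|\boldsymbol{\theta}^t\|_1$, and the standard strongly-monotone GDA analysis (Appendix~E of \citet{meng2023efficient}) telescopes $D_\psi(\boldsymbol{\theta}^{*,n},\boldsymbol{\theta}^{t,n})$ to give $\sum_{t=1}^T C_1 D_\psi(\hat{\boldsymbol{x}}^{*,n},\hat{\boldsymbol{x}}^{t,n}) \leq C_2$ with $C_1 = 2\eta\mu-(\eta C_0)^2$ and $C_2$ independent of $T$. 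So your proposed Lyapunov function "regret accumulator combined with squared distance" is realized concretely as the squared Euclidean distance in $\boldsymbol{\theta}$-space; the non-smooth clipping in RM+/DRM is absorbed because the clipped update is itself the closed-form OMD solution, not an obstruction to it. Your first pass (external regret $O(\sqrt{T})$ plus monotonicity, yielding $O(T^{-1/4})$) is a detour the paper skips entirely, and you correctly diagnose it as insufficient. Two minor points to fix if you complete the argument: the strong-monotonicity modulus in the $\hat{\boldsymbol{x}}$-coordinates is $\mu\,\lambda_{\min}(\boldsymbol{B}^\top\boldsymbol{B})$ rather than $\mu$ (positive since $\tau=1-n\epsilon>0$ makes $\boldsymbol{B}$ invertible, but it must be tracked); and the positivity of the effective constant $C_1$ imposes an implicit step-size condition $\eta < 2\mu/C_0^2$ that your sketch does not mention.
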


By iteratively updating \(\boldsymbol{x}^{r,n}\) via Equation~\eqref{eq:rt_init}, the saddle point \(\boldsymbol{x}^{*,n}\) converges to the NE \(\boldsymbol{x}^*\), ensuring asymptotic last-iterate convergence to the NE of the \(\epsilon\)-perturbed game, as formalized in Theorem~\ref{thm:asymptotic_last_iterate_converge_of_RTRM}.

\begin{theorem}[Asymptotic Last-Iterate Convergence of RTRM]\label{thm:asymptotic_last_iterate_converge_of_RTRM}
    In an \(\epsilon\)-perturbed game \(\Gamma^\epsilon = (\mathcal{X}_1^\epsilon, \mathcal{X}_2^\epsilon, \boldsymbol{U})\), let \(\{\boldsymbol{x}^{*,1}, \dots, \boldsymbol{x}^{*,n}\}\) be the saddle-point sequence across \(n\) RT-BSPPs. Then, \(\boldsymbol{x}^{*,n}\) is bounded, converges to the NE \(\boldsymbol{x}^*\) of \(\Gamma^\epsilon\), and the strategy \(\boldsymbol{x}^{t,n}\) produced by RTRM asymptotically converges to \(\boldsymbol{x}^*\).
\end{theorem}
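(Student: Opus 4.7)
The plan is to decompose the claim into two pieces: (i) the sequence of perturbed saddle points $\{\boldsymbol{x}^{*,n}\}$ produced by successive RT-BSPPs converges to the Nash equilibrium $\boldsymbol{x}^*$ of $\Gamma^\epsilon$, and (ii) inside each $n$-th RT-BSPP, the iterate $\boldsymbol{x}^{t,n}$ tracks $\boldsymbol{x}^{*,n}$ up to an $O(1/\sqrt{T})$ error by Theorem~\ref{thm:best_iterate_converge_of_RTRM}. A triangle inequality then delivers the asymptotic last-iterate statement. Boundedness is immediate, since all iterates and all saddle points live in the compact perturbed polytope $\mathcal{X}_1^\epsilon \times \mathcal{X}_2^\epsilon$.

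For part (i), I would set $e_n := \|\boldsymbol{x}^{*,n} - \boldsymbol{x}^*\|_2$ and invoke the reference-bias estimate~\eqref{eq:reference_condition} to obtain $e_n \leq L\mu \|\boldsymbol{x}^{r,n} - \boldsymbol{x}^*\|_2$ for a constant $L$ depending on the utility matrix and the perturbed polytope geometry. Unfolding the initialization rule~\eqref{eq:rt_init}, namely $\boldsymbol{x}^{r,n} = \boldsymbol{x}^{T+1,n-1}$, and applying Theorem~\ref{thm:best_iterate_converge_of_RTRM} to bound $\|\boldsymbol{x}^{T+1,n-1} - \boldsymbol{x}^{*,n-1}\|_2$ by some $\delta_T = O(1/\sqrt{T})$ yields the recursion
\[
    e_n \leq L\mu\bigl(e_{n-1} + \delta_T\bigr).
\]
Provided the RT weight $\mu$ is small enough that $\rho := L\mu < 1$, iterating gives $e_n \leq \rho^n e_0 + \rho\,\delta_T/(1-\rho)$, so choosing $T$ large and then letting $n\to\infty$ drives $e_n$ to $0$. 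For part (ii), the triangle inequality
\[
    \|\boldsymbol{x}^{t,n} - \boldsymbol{x}^*\|_2 \leq \|\boldsymbol{x}^{t,n} - \boldsymbol{x}^{*,n}\|_2 + \|\boldsymbol{x}^{*,n} - \boldsymbol{x}^*\|_2
\]
combined with Theorem~\ref{thm:best_iterate_converge_of_RTRM} and part (i) finishes the proof.

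The main obstacle is bridging the gap between the \emph{best}-iterate guarantee of Theorem~\ref{thm:best_iterate_converge_of_RTRM} and the \emph{last}-iterate $\boldsymbol{x}^{T+1,n-1}$ that is actually used to re-seed $\boldsymbol{x}^{r,n}$. I would resolve this either by showing that within a single RT-BSPP the residual $\|\boldsymbol{x}^{t,n} - \boldsymbol{x}^{*,n}\|_2$ does not blow up after the best iterate is reached, which would follow from a standard non-expansion property of the RTRM update together with the strong monotonicity injected by the regularization $\mu(\boldsymbol{x}_i^r - \boldsymbol{x}_i)$ in~\eqref{eq:p_v_rtrm}; or, if such a monotonicity is unavailable, by letting the inner horizon $T = T_n$ grow with $n$ so that $\delta_{T_n} \to 0$ and the recursion above still drives $e_n$ to zero. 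A secondary technical point is that~\eqref{eq:reference_condition} is stated for simplex strategy spaces; I would verify it transports to the sequence-form polytope $\mathcal{Q}_i^\epsilon$ via the basis transformation $\boldsymbol{B}_i$ of Section~\ref{subsec:perturbed_nfg}, so that the contraction constant $L$ depends only on $\|\boldsymbol{B}_i\|$ and the extreme points of the perturbed polytope and is therefore finite for every fixed $\epsilon > 0$.
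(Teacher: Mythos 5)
Your decomposition is the same as the paper's: boundedness from compactness of the perturbed polytope, a triangle inequality splitting $\|\boldsymbol{x}^{t,n}-\boldsymbol{x}^*\|_2$ into the inner-loop error (handled by Theorem~\ref{thm:best_iterate_converge_of_RTRM}) and the saddle-point drift $\|\boldsymbol{x}^{*,n}-\boldsymbol{x}^*\|_2$. The best-iterate/last-iterate mismatch you flag is real, and the paper does not resolve it either: it simply assumes $T\to\infty$ per RT-BSPP so that $\boldsymbol{x}^{r,n}=\boldsymbol{x}^{T+1,n-1}=\boldsymbol{x}^{*,n-1}$, which is essentially your ``let $T_n$ grow'' fallback. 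So far so good.

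The genuine gap is in how you drive $\boldsymbol{x}^{*,n}\to\boldsymbol{x}^*$. You lean on Equation~\eqref{eq:reference_condition} as if it were a proven inequality $e_n\le L\mu\,\|\boldsymbol{x}^{r,n}-\boldsymbol{x}^*\|_2$ with a known constant, and then you need the contraction condition $L\mu<1$. Equation~\eqref{eq:reference_condition} is only stated as a proportionality heuristic attributed to prior work; no such constant is established anywhere in the paper, and the theorem carries no smallness assumption on $\mu$. The paper's actual argument avoids this entirely: from the saddle-point variational inequality of the $n$-th RT-BSPP plus nonnegativity of $\mathrm{Exp}(\boldsymbol{x}^{*,n})$ it derives the Pythagorean-type decrease $\|\boldsymbol{x}^*-\boldsymbol{x}^{*,n}\|_2^2\le\|\boldsymbol{x}^*-\boldsymbol{x}^{r,n}\|_2^2-\|\boldsymbol{x}^{r,n}-\boldsymbol{x}^{*,n}\|_2^2$, and then uses metric subregularity (Lemma~\ref{lemma:MS}) to show $\|\boldsymbol{x}^{*,n}-\boldsymbol{x}^{r,n}\|_2\ge C/\mu$ whenever $\boldsymbol{x}^{*,n}\ne\boldsymbol{x}^*$. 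With $\boldsymbol{x}^{r,n}=\boldsymbol{x}^{*,n-1}$ this gives a per-step decrease of $\{\|\boldsymbol{x}^*-\boldsymbol{x}^{*,n}\|_2\}$ that is bounded away from zero unless the sequence has already reached the NE, so the monotone limit must be zero --- for \emph{every} $\mu>0$. To close your argument you would either have to prove~\eqref{eq:reference_condition} as a genuine Lipschitz bound with $L\mu<1$ (which is not available), or replace that step with the variational-inequality-plus-subregularity argument; note also that a merely decreasing $e_n$ without the quantitative lower bound on the decrement would not by itself force $e_n\to 0$.
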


We further extend RTRM to Counterfactual Regret Minimization (CFR) for perturbed EFGs, performing local RTRM optimization at each information set in a bottom-up manner, termed \emph{Reward Transformation Counterfactual Regret Minimization (RTCFR)}. For each action \(a\) at information set \(I \in \mathcal{I}_i\), the RT counterfactual value is:
\begin{equation}
    \tilde{\boldsymbol{v}}_i^t[I,a] = \boldsymbol{v}_i^t[I,a] + \mu (\boldsymbol{x}_i^r[I,a] - \boldsymbol{x}_i^t[I,a]).
    \label{eq:v_rtcfr}
\end{equation}

We establish best-iterate convergence of RTCFR to the saddle point in the \(\epsilon\)-perturbed \(n\)-th RT-BSPP using sequence-form strategies:

\begin{theorem}[Best-Iterate Convergence of RTCFR]\label{thm:best_iterate_convergence_of_RTCFR}
    Given a reference strategy \(\boldsymbol{q}^{r,n}\), RT weight \(\mu\), and perturbation \(\epsilon\), let \(\{\boldsymbol{q}^{t,n}\}\) be the strategy sequence produced by RTCFR in the \(\epsilon\)-perturbed \(n\)-th RT-BSPP, with saddle point \(\boldsymbol{q}^{*,n} \in \mathcal{Q}^{*,n}\). For any \(T \geq 1\), there exists \(t \in \{1, \dots, T\}\) such that:
    \[
        \|\boldsymbol{q}^{*,n} - \boldsymbol{q}^{t,n}\|_2 \leq O\left(\frac{1}{\sqrt{T}}\right).
    \]
\end{theorem}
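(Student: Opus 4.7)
The plan is to reduce the theorem to Theorem~\ref{thm:best_iterate_converge_of_RTRM} by observing that, at each information set, RTCFR is an instance of RTRM driven by a counterfactual-value signal. First, I would verify that at each information set \(I \in \mathcal{I}_i\), the regret update for the behavioral strategy \(\boldsymbol{x}_i^t(I)\) obtained from the RT counterfactual value \(\tilde{\boldsymbol{v}}_i^t[I,\cdot]\) is exactly the RTRM update of Section~\ref{subsec:perturbed_nfg} applied to a local \(\epsilon\)-perturbed simplex \(\Delta_{\geq\epsilon}^{A(I)}\). This identifies the local dynamics at \(I\) with an RTRM instance whose ``utility vector'' is the standard CFR counterfactual value and whose RT regularizer is \(\mu(\boldsymbol{x}_i^r[I,\cdot] - \boldsymbol{x}_i^t[I,\cdot])\); the change of basis \(\boldsymbol{B}\) from Equation~\eqref{eq:basis} enters only in mapping \(\hat{\boldsymbol{x}}_i(I)\) back to \(\boldsymbol{x}_i(I) \in \Delta_{\geq\epsilon}^{A(I)}\).

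Second, I would translate the behavioral statement into a sequence-form statement. Using Equation~\eqref{eq:q} and the perfect-recall factorization, the squared sequence-form distance decomposes as
\[
  \|\boldsymbol{q}_i^{*,n} - \boldsymbol{q}_i^{t,n}\|_2^2 \;\leq\; \sum_{I \in \mathcal{I}_i} w(I)\, \|\boldsymbol{x}_i^{*,n}(I) - \boldsymbol{x}_i^{t,n}(I)\|_2^2,
\]
for weights \(w(I)\) controlled by the parent-sequence reach probabilities, which in the \(\epsilon\)-perturbed polytope are uniformly bounded (in terms of \(\epsilon\) and the tree depth). Thus a best-iterate bound on each local behavioral gap transfers into a best-iterate bound on the sequence-form gap.

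Third, I would invoke Theorem~\ref{thm:best_iterate_converge_of_RTRM} at each information set. The subtlety is that Theorem~\ref{thm:best_iterate_converge_of_RTRM} produces a ``good'' iterate that may vary across information sets. To extract a single common index \(t\), I would strengthen the intermediate step to an averaged bound of the form \(\tfrac{1}{T}\sum_{t=1}^T \|\boldsymbol{x}_i^{*,n}(I) - \boldsymbol{x}_i^{t,n}(I)\|_2^2 \leq O(1/\sqrt{T})\) at each \(I\)—which the RTRM proof naturally supports via summation of per-step descent of a squared-distance potential—then sum over \(I\) with the weights \(w(I)\) and apply the averaging principle to pick a single index \(t \in \{1,\dots,T\}\) with \(\|\boldsymbol{q}^{*,n} - \boldsymbol{q}^{t,n}\|_2^2 \leq O(1/\sqrt{T})\). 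Taking square roots gives the claim.

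The main obstacle is the coupling between information sets: the counterfactual value \(\boldsymbol{v}_i^t[I,a]\) depends recursively on descendant information sets via Equation~\eqref{eq:v_cfr} as well as on the opponent's sequence-form strategy, so the local RTRM ``utility'' is time-varying rather than a fixed bilinear signal as in Theorem~\ref{thm:best_iterate_converge_of_RTRM}. I would handle this by a bottom-up induction on the tree: at leaf-adjacent information sets, the counterfactual value depends only on opponent play and the RT term, so Theorem~\ref{thm:best_iterate_converge_of_RTRM} applies directly; at interior information sets, the drift between the current counterfactual value and its saddle-point value is controlled by the inductive best-iterate bounds of descendants and by the opponent's gap, and the \(\mu\)-strongly convex RT correction absorbs this drift into the same \(O(1/\sqrt{T})\) rate. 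Careful bookkeeping of how the \(\epsilon\)-dependent reach weights propagate up the tree is what makes the argument work in the sequence-form norm.
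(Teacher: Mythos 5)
Your proposal follows essentially the same route as the paper's proof: run RTRM locally at each information set, invoke Theorem~\ref{thm:best_iterate_converge_of_RTRM} for a per-information-set \(O(1/\sqrt{T})\) bound with uniform constants \(C_1^{\min}, C_2^{\max}\), and transfer to the sequence form via a reach-probability-weighted decomposition bounded by \(M_{\mathcal{Q}}\). The two subtleties you flag --- extracting a single common index \(t\) across information sets (the paper implicitly applies the local bound at one \(t\) for every \(I\) without justification), and the time-varying counterfactual-value signal at interior information sets (the paper simply asserts that the local dynamics satisfy Theorem~\ref{thm:best_iterate_converge_of_RTRM} without any bottom-up drift argument) --- are genuine, and your proposed fixes (summed squared-distance potentials followed by a single averaging step, and induction up the tree with the \(\mu\)-strongly convex RT term absorbing descendant drift) make explicit precisely the steps the paper's own proof leaves implicit.
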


By updating \(\boldsymbol{q}^{r,n} \gets \boldsymbol{q}^{T+1,n-1}\), RTCFR achieves asymptotic last-iterate convergence to an \(\epsilon\)-EFPE:

\begin{theorem}[Asymptotic Last-Iterate Convergence of RTCFR]\label{thm:asymptotic_last_iterate_converge_of_RTCFR}
    In an \(\epsilon\)-perturbed EFG \(\Gamma^\epsilon = (\mathcal{Q}_1^\epsilon, \mathcal{Q}_2^\epsilon, \boldsymbol{U})\), let \(\{\boldsymbol{q}^{*,1}, \dots, \boldsymbol{q}^{*,n}\}\) be the saddle-point sequence across \(n\) RT-BSPPs. Then, \(\boldsymbol{q}^{*,n}\) is bounded, converges to the \(\epsilon\)-EFPE \(\boldsymbol{q}^*\) of \(\Gamma^\epsilon\), and the strategy \(\boldsymbol{q}^{t,n}\) produced by RTCFR asymptotically converges to \(\boldsymbol{q}^*\).
\end{theorem}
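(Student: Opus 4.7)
The plan is to adapt the argument used for Theorem~\ref{thm:asymptotic_last_iterate_converge_of_RTRM} to the sequence-form representation of perturbed extensive-form games, invoking Theorem~\ref{thm:best_iterate_convergence_of_RTCFR} in place of the RTRM best-iterate result and exploiting compactness of $\mathcal{Q}_i^\epsilon$. The proof decomposes into three pieces: boundedness of $\{\boldsymbol{q}^{*,n}\}$, convergence $\boldsymbol{q}^{*,n} \to \boldsymbol{q}^*$ of the saddle-point sequence, and asymptotic convergence of the RTCFR iterates $\boldsymbol{q}^{t,n} \to \boldsymbol{q}^*$.

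Boundedness follows directly, since every $\boldsymbol{q}^{*,n}$ lies in the compact polytope $\mathcal{Q}_1^\epsilon \times \mathcal{Q}_2^\epsilon$. Next, I would lift the bias bound~\eqref{eq:reference_condition} to the sequence-form setting, establishing
\[
\|\boldsymbol{q}^{*,n} - \boldsymbol{q}^*\|_2 \leq C \mu \|\boldsymbol{q}^{r,n} - \boldsymbol{q}^*\|_2
\]
for a constant $C$ depending on $\Gamma^\epsilon$. Combining this with the reference update $\boldsymbol{q}^{r,n} \gets \boldsymbol{q}^{T+1,n-1}$ and choosing the inner horizon $T$ large enough that Theorem~\ref{thm:best_iterate_convergence_of_RTCFR} forces $\|\boldsymbol{q}^{T+1,n-1} - \boldsymbol{q}^{*,n-1}\|_2 \leq \delta$ for arbitrarily small $\delta$, a triangle inequality produces the recursion
\[
\|\boldsymbol{q}^{*,n} - \boldsymbol{q}^*\|_2 \leq C\mu\bigl(\|\boldsymbol{q}^{*,n-1} - \boldsymbol{q}^*\|_2 + \delta\bigr),
\]
which is contractive for $C\mu < 1$ and drives $\boldsymbol{q}^{*,n} \to \boldsymbol{q}^*$.

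The last step passes from best-iterate convergence inside each RT-BSPP to asymptotic convergence of the RTCFR iterate itself. Given the geometric convergence of $\boldsymbol{q}^{*,n}$, I would select in phase $n$ an index $t_n$ at which Theorem~\ref{thm:best_iterate_convergence_of_RTCFR} guarantees $\|\boldsymbol{q}^{t_n,n} - \boldsymbol{q}^{*,n}\|_2 \leq O(1/\sqrt{T_n})$, and drive $T_n \to \infty$ fast enough that this error vanishes; one more triangle inequality delivers $\boldsymbol{q}^{t_n,n} \to \boldsymbol{q}^*$, which matches the asymptotic last-iterate claim. A Bolzano–Weierstrass argument applied to the bounded orbit $\{\boldsymbol{q}^{t,n}\}$, together with continuity of the RTCFR update, then promotes this to convergence of the whole sequence rather than a subsequence.

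The hard part will be establishing the sequence-form analogue of~\eqref{eq:reference_condition}, because the RT regularizer in RTCFR is applied in behavioral coordinates at each information set, whereas the $\epsilon$-EFPE and saddle points live in the global sequence-form polytope $\mathcal{Q}^\epsilon$. Converting local, counterfactual-level perturbation bounds into a single global bound on $\boldsymbol{q}^{*,n}$ requires exploiting the chain rule~\eqref{eq:q} together with the strict positivity $\boldsymbol{q}[\sigma] \geq \epsilon^{|\sigma|}$ guaranteed by perturbation, which keeps the behavioral-to-sequence map Lipschitz on $\mathcal{Q}^\epsilon$ with a constant depending on $\epsilon$ and the tree depth. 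Managing this local-to-global conversion and ensuring that the chosen $\mu$ remains compatible with contractivity across all information sets simultaneously, while the reference updates propagate through the bottom-up CFR recursion, is the main technical challenge.
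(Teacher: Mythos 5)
Your skeleton (boundedness by compactness, convergence of the saddle-point sequence, then a triangle inequality with the best-iterate bound of Theorem~\ref{thm:best_iterate_convergence_of_RTCFR}) matches the paper's, but the engine you propose for the middle step is different and has a genuine gap. You drive \(\boldsymbol{q}^{*,n}\to\boldsymbol{q}^*\) by postulating a sequence-form bias bound \(\|\boldsymbol{q}^{*,n}-\boldsymbol{q}^*\|_2\le C\mu\|\boldsymbol{q}^{r,n}-\boldsymbol{q}^*\|_2\) and then iterating it as a contraction under the condition \(C\mu<1\). That bound is exactly the hard part, you do not prove it, and the contraction condition is an extra hypothesis on \(\mu\) that appears nowhere in the theorem statement; for an unknown game-dependent constant \(C\) you cannot guarantee \(C\mu<1\). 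The paper avoids this entirely: from the saddle-point variational inequality summed over both players (so that the exploitability term \(\mathrm{Exp}(\boldsymbol{x}^{*,n})\ge 0\) can be dropped) it derives the \emph{unconditional} non-expansiveness \(\|\boldsymbol{x}^*-\boldsymbol{x}^{*,n}\|_2^2\le\|\boldsymbol{x}^*-\boldsymbol{x}^{r,n}\|_2^2-\|\boldsymbol{x}^{r,n}-\boldsymbol{x}^{*,n}\|_2^2\) at each information set, and then invokes saddle-point metric subregularity (Lemma~\ref{lemma:MS}) to show \(\|\boldsymbol{x}^{*,n}-\boldsymbol{x}^{r,n}\|_2\ge C/\mu\) whenever \(\boldsymbol{x}^{*,n}\ne\boldsymbol{x}^*\). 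This turns the telescoping identity \(a^2-b^2=(a-b)(a+b)\) into a uniform positive decrement of \(\|\boldsymbol{q}^*-\boldsymbol{q}^{*,n}\|_2\) per phase, and monotone convergence finishes the argument with no smallness condition on \(\mu\). Metric subregularity is the missing idea in your write-up: without it, non-expansiveness alone (contraction factor \(1\)) does not force convergence, and your route needs the unestablished strict contraction to compensate.

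Two smaller issues. First, your recursion \(\|\boldsymbol{q}^{*,n}-\boldsymbol{q}^*\|_2\le C\mu(\|\boldsymbol{q}^{*,n-1}-\boldsymbol{q}^*\|_2+\delta)\) with fixed \(\delta>0\) only yields convergence to a \(O(\delta)\)-neighborhood, so you still need \(T_n\to\infty\) to conclude; the paper makes the same idealization (\(\boldsymbol{q}^{r,n}=\boldsymbol{q}^{*,n-1}\)) explicitly, so this is acceptable but should be stated as such rather than folded into a contraction. Second, the closing claim that Bolzano--Weierstrass plus continuity of the update promotes best-iterate convergence to convergence of the whole sequence is not justified: a bounded sequence with one convergent subsequence can have other accumulation points, and nothing you say rules them out. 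The theorem (and the paper's proof) only delivers convergence along the selected best iterates \(t_n\) within each phase, which is the ``asymptotic last-iterate'' claim being made; you should not assert more.
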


For Theorem~\ref{thm:asymptotic_last_iterate_converge_of_RTCFR}, a strict theoretical guarantee requires \(T \to \infty\) in each RT-BSPP to ensure \(\boldsymbol{q}^{r,n} = \boldsymbol{q}^{T+1,n-1} = \boldsymbol{q}^{*,n-1}\). In practice, a carefully tuned finite \(T\) yields a reference strategy sequence \(\{\boldsymbol{q}^{r,n} \gets \boldsymbol{q}^{T+1,n-1}\}\) that effectively converges to the \(\epsilon\)-EFPE \(\boldsymbol{q}^*\), satisfying Equation~\eqref{eq:reference_condition} and achieving best-iterate convergence per Theorem~\ref{thm:best_iterate_convergence_of_RTCFR}, thus fulfilling Theorem~\ref{thm:asymptotic_last_iterate_converge_of_RTCFR}.

\section{Adaptive Perturbation for RTCFR in Solving EFPE}\label{sec:adaptive_setting}

This section investigates the limitations of fixed perturbations in CFR dynamics and proposes an adaptive perturbation method for RTCFR. Guided by a novel \emph{Information Set Nash Equilibrium (ISNE)} metric, our approach dynamically adjusts perturbations to balance convergence speed and equilibrium accuracy in computing EFPE.

\subsection{Challenges of Fixed Perturbations}

Although fixed perturbations ensure reachability of all information sets, small perturbation values hinder strategy updates in subgames with low reach probabilities, compromising stability. In games with maximum depth \(d\), setting a minimum action probability \(\epsilon\) can result in branches with arrival probabilities as low as \(\epsilon^d\), complicating optimization in these subgames.

In RTCFR dynamics (Equations~\eqref{eq:v_rtcfr}, \eqref{eq:p_v'}--\eqref{eq:p_x}), for an information set \(I \in \mathcal{I}_i\) and action \(a \in A(I)\), regret matching assigns strategies proportional to positive cumulative regret: \(\boldsymbol{x}_i^{t+1}[Ia] \propto [\boldsymbol{R}_i^t[Ia]]^+ / \|[\boldsymbol{R}_i^t(I)]^+\|_1\). When the opponent’s reach probability is low (\(q_{-i}^t(I) \to 0\)), counterfactual values \(\boldsymbol{v}_i^t[Ia]\) diminish, leading to small regrets \(\boldsymbol{r}_i^t[Ia]\). This impedes updates to \(\boldsymbol{R}_i^t[Ia]\), causing subgame strategies to deviate significantly from the EFPE.

Fixed perturbations create a trade-off between NE approximation accuracy and the convergence rate of NE refinements, as noted by \citet{farina2017regret, kroer2017smoothing}. Large perturbations accelerate refinement but risk premature stagnation in NE convergence, while small perturbations improve accuracy at the cost of slower convergence. This motivates an adaptive perturbation strategy, which we develop within the RTCFR framework below.

\subsection{Adaptive Perturbation Setting}\label{subsec:adaptive_perturbation}

The EFPE requires equilibrium conditions at every information set, including those with low reach probabilities. To evaluate strategy stability, we introduce the \emph{Information Set Regret} metric:

\begin{definition}[Information Set Regret]\label{def:info_set_regret}
For an information set \(I \in \mathcal{I}_i\) of player \(i \in \{1, 2\}\), given action values \(\boldsymbol{v}_i'(I) \in \mathbb{R}^{|A(I)|}\) assuming all players reach \(I\), the information set regret is:
\[
    \boldsymbol{r}_i'(I) = \boldsymbol{v}_i'(I) - \langle \boldsymbol{v}_i'(I), \boldsymbol{x}_i(I) \rangle \mathbf{1}.
\]
\end{definition}

Unlike counterfactual regret, which weights regrets by opponent reach probabilities, information set regret assumes the opponent reaches \(I\) with probability 1. Using Bayes' rule \cite{srinivasan2018actor}, it relates to counterfactual regret as:
\[
    \boldsymbol{r}_i'(I) = \frac{\boldsymbol{r}_i(I)}{\sum_{h \in I} q_{-i}(h)},
\]
where \(\boldsymbol{r}_i(I)\) is the counterfactual regret at \(I\), and \(q_{-i}(h)\) is the opponent’s reach probability for state \(h \in I\). Strategy stability at \(I\) is proportional to the maximum regret \(\max(\boldsymbol{r}_i'(I))\), with an equilibrium achieved when \(\max(\boldsymbol{r}_i'(I)) = 0\).

To quantify proximity to an exact \(\epsilon\)-EFPE, we define the \(\delta\)-Information Set Nash Equilibrium (\(\delta\)-ISNE):

\begin{definition}[\(\delta\)-ISNE]\label{def:delta_isne}
In an \(\epsilon\)-perturbed EFG \(\Gamma^\epsilon = (\mathcal{Q}_1^\epsilon, \mathcal{Q}_2^\epsilon, \boldsymbol{U})\), let \(r^{\max}(\boldsymbol{q}) = \max_{i \in \{1, 2\}} \max_{I \in \mathcal{I}_i} \max(\boldsymbol{r}_i'(I))\) be the maximum information set regret for a strategy profile \(\boldsymbol{q} \in \mathcal{Q}^\epsilon\). A strategy \(\boldsymbol{q}\) is a \(\delta\)-ISNE if:
\[
    r^{\max}(\boldsymbol{q}) \leq \delta.
\]
\end{definition}

Typically, the exploitability \(\text{Exp}(\boldsymbol{q}) \ll \delta |\mathcal{I}_1 \cup \mathcal{I}_2|\) in the \(\epsilon\)-perturbed game \(\Gamma^\epsilon\), making the \(\delta\)-ISNE a robust approximation of an \(\epsilon\)-EFPE both locally and globally. The \(\delta\)-ISNE captures subgame perfection in imperfect-information games, analogous to subgame-perfect equilibria in perfect-information games \cite{selten1965spieltheoretische}, conditioned on state distributions within information sets. Reducing \(\delta\) yields a more precise \(\epsilon\)-EFPE, with an exact \(\epsilon\)-EFPE achieved as \(\delta \to 0\). By Theorem~\ref{thm:asymptotic_last_iterate_converge_of_RTCFR}, the maximum information set regret \(r^{\max}(\boldsymbol{q}^{t,n})\) vanishes as \(n, T \to \infty\), ensuring that a \(\delta\)-ISNE converges to an \(\epsilon\)-EFPE as \(\delta \to 0\). Thus, the \(\delta\)-ISNE serves as an effective metric for computing \(\epsilon\)-EFPEs.

\begin{algorithm}[tb]
   \caption{RTCFR with Adaptive Perturbation for Perturbed EFGs}
   \label{alg:adaptive_perturbation}
\begin{algorithmic}[1]
   \STATE {\bfseries Input:} RT weight \(\mu\), iterations \(T\), initial perturbation \(\epsilon\), ISNE threshold \(\delta\), decay factor \(\gamma \in (0, 1)\)
   \STATE Initialize \(\boldsymbol{x}^{1,1}(I) \gets \frac{\mathbf{1}}{|A(I)|}\), \(\boldsymbol{R}^{0,1}(I) \gets \mathbf{0}\), \(\forall I \in \mathcal{I}\)
   \FOR{\(n = 1\) {\bfseries to} \(N\)}
       \STATE \(\boldsymbol{x}^{1,n} \gets \boldsymbol{x}^{T+1,n-1}\), \(\boldsymbol{R}^{1,n} \gets \boldsymbol{R}^{T,n-1}\), \(\boldsymbol{x}^{r,n} \gets \boldsymbol{x}^{T+1,n-1}\)
       \IF{\(r^{\text{max}}(\boldsymbol{x}^{1,n}) < \delta\)}
           \STATE \(\epsilon \gets \epsilon \cdot \gamma, \quad \delta \gets \delta \cdot \gamma\)
           \STATE Update basis matrix \(\boldsymbol{B}(I)\) using \(\epsilon\), \(\forall I \in \mathcal{I}\)
       \ENDIF
       \FOR{\(t = 1\) {\bfseries to} \(T\)}
           \FOR{player \(i \in \{1, 2\}\)}
               \STATE Compute counterfactual value: 
                     \(\boldsymbol{v}_i^{t,n} \gets \langle \boldsymbol{U}_i, \boldsymbol{q}_{-i}^{t,n} \rangle \)
               \FOR{each information set \(I \in \mathcal{I}_i\) (bottom-up)}
                   \STATE compute RT counterfactual value: \\
                            \(\tilde{\boldsymbol{v}}_i^{t,n}=\boldsymbol{v}_i^{t,n}(I)+\mu(\boldsymbol{x}_i^{r,n}-\boldsymbol{x}_i^{t,n})\)
                   \STATE Compute expected value: 
                         \(u(I) \gets \langle \tilde{\boldsymbol{v}}_i^{t,n}(I), \boldsymbol{x}_i^{t,n}(I) \rangle\)
                   \STATE Compute regret in perturbed space: \\
                         \(\boldsymbol{r}_i^{t,n}(I) \gets \boldsymbol{B}_i^\top(I) \tilde{\boldsymbol{v}}_i^{t,n}(I) - u(I) \mathbf{1}\)
                   \STATE Update cumulative regret \(\boldsymbol{R}_i^{t,n}(I)\) and strategy \(\boldsymbol{\hat{x}}_i^{t+1,n}(I)\) using Equations~\eqref{eq:p_R}, \eqref{eq:p_x'}
                   \STATE Map to original space: 
                         \(\boldsymbol{x}_i^{t+1,n}(I) \gets \boldsymbol{B}_i(I) \boldsymbol{\hat{x}}_i^{t+1,n}(I)\)
                   \STATE Update parent sequence counterfactual value: \\
                         \(\boldsymbol{v}_i^{t,n}[pI] \gets \boldsymbol{v}_i^{t,n}[pI] + \langle \boldsymbol{v}_i^{t,n}(I),\boldsymbol{x}_i^{t,n}(I)\rangle\)
               \ENDFOR
           \ENDFOR
       \ENDFOR
   \ENDFOR
   \STATE {\bfseries Output:} \(\boldsymbol{x}^{T+1,N}\)
\end{algorithmic}
\end{algorithm}

We propose an adaptive perturbation method for RTCFR, detailed in Algorithm~\ref{alg:adaptive_perturbation}, which uses \(\delta\)-ISNE as a threshold to dynamically reduce the perturbation \(\epsilon\). The training process consists of \(N\) RT-BSPPs, each with \(T\) iterations. Within each RT-BSPP, we monitor \(r^{\max}(\boldsymbol{q}^{t,n})\) and reduce \(\epsilon\) and \(\delta\) when \(r^{\max}(\boldsymbol{q}^{t,n}) \leq \delta\), promoting balanced convergence across information sets (lines 5--8). By Theorem~\ref{thm:best_iterate_convergence_of_RTCFR}, this adaptive setting preserves RTCFR’s convergence to the saddle point \(\boldsymbol{q}^{*,n}\) of each RT-BSPP, despite potential early oscillations, ensuring asymptotic convergence to the EFPE \(\boldsymbol{q}^*\) as \(\epsilon \to 0\) (lines 9--22).

\section{Experiments}\label{sec:experiments}

\begin{figure*}[t]
    \centering
    \begin{subfigure}[b]{0.245\linewidth}
        \centering
        \includegraphics[width=\linewidth, height=0.12\textheight, keepaspectratio]{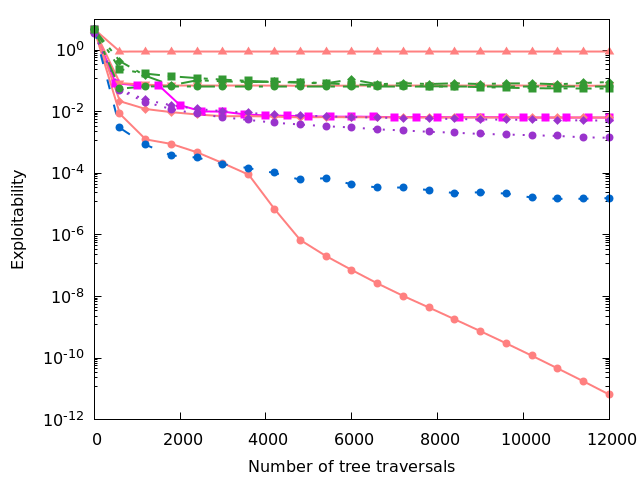}
        \caption{Leduc Poker (3) -- Exp.}
        \label{fig:leduc-3-exp}
    \end{subfigure}
    \hfill
    \begin{subfigure}[b]{0.245\linewidth}
        \centering
        \includegraphics[width=\linewidth, height=0.12\textheight, keepaspectratio]{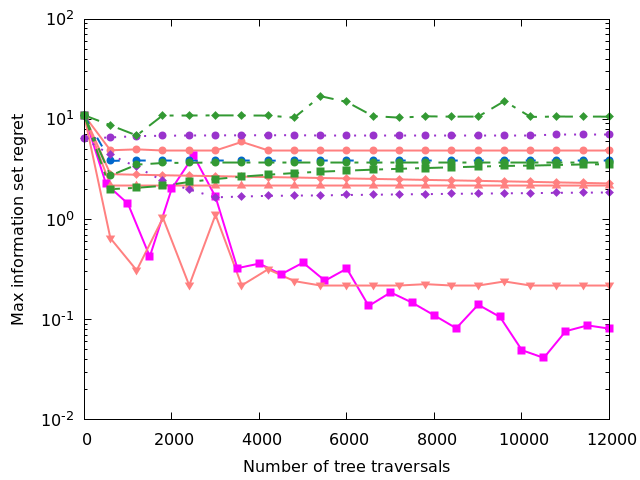}
        \caption{Leduc Poker (3) -- Max Regret}
        \label{fig:leduc-3-max-regret}
    \end{subfigure}
    \hfill
    \begin{subfigure}[b]{0.245\linewidth}
        \centering
        \includegraphics[width=\linewidth, height=0.12\textheight, keepaspectratio]{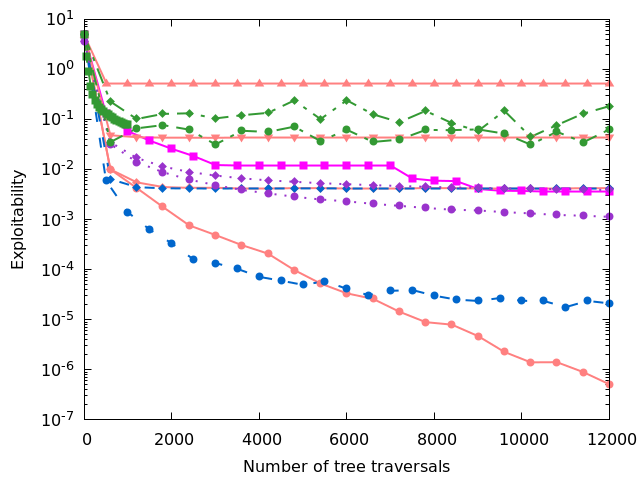}
        \caption{Leduc Poker (5) -- Exp.}
        \label{fig:l5-exp}
    \end{subfigure}
    \hfill
    \begin{subfigure}[b]{0.245\linewidth}
        \centering
        \includegraphics[width=\linewidth, height=0.12\textheight, keepaspectratio]{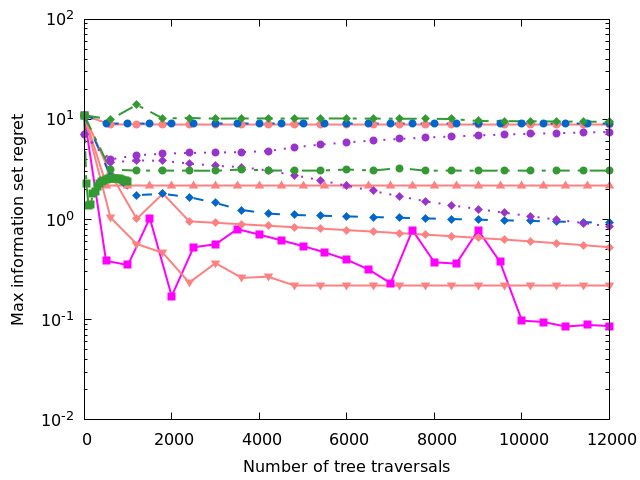}
        \caption{Leduc Poker (5) -- Max Regret}
        \label{fig:l5-max-regret}
    \end{subfigure}

    \vspace{0.3cm} 

    \begin{subfigure}[b]{0.245\linewidth}
        \centering
        \includegraphics[width=\linewidth, height=0.12\textheight, keepaspectratio]{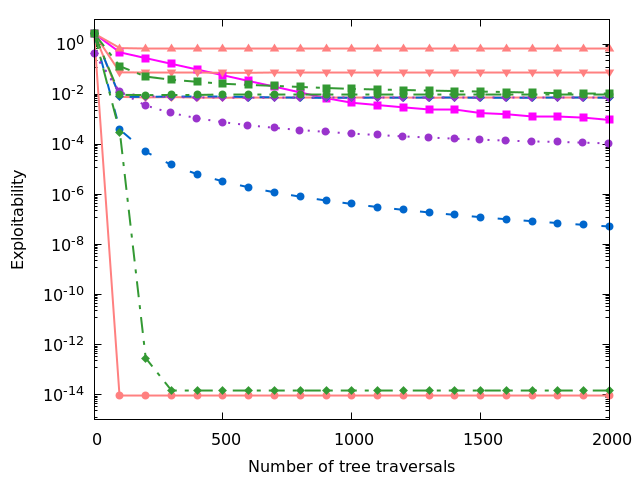}
        \caption{Goofspiel (3) -- Exp.}
        \label{fig:goofspiel3-exp}
    \end{subfigure}
    \hfill
    \begin{subfigure}[b]{0.245\linewidth}
        \centering
        \includegraphics[width=\linewidth, height=0.12\textheight, keepaspectratio]{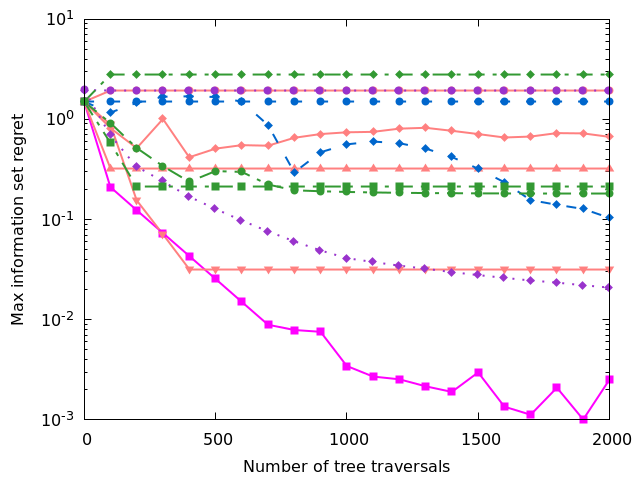}
        \caption{Goofspiel (3) -- Max Regret}
        \label{fig:goofspiel3-max-regret}
    \end{subfigure}
    \hfill
    \begin{subfigure}[b]{0.245\linewidth}
        \centering
        \includegraphics[width=\linewidth, height=0.12\textheight, keepaspectratio]{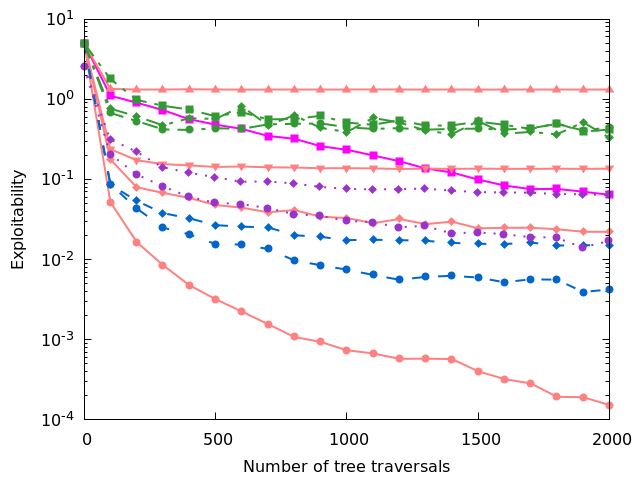}
        \caption{Goofspiel (4) -- Exp.}
        \label{fig:goofspiel4-exp}
    \end{subfigure}
    \hfill
    \begin{subfigure}[b]{0.245\linewidth}
        \centering
        \includegraphics[width=\linewidth, height=0.12\textheight, keepaspectratio]{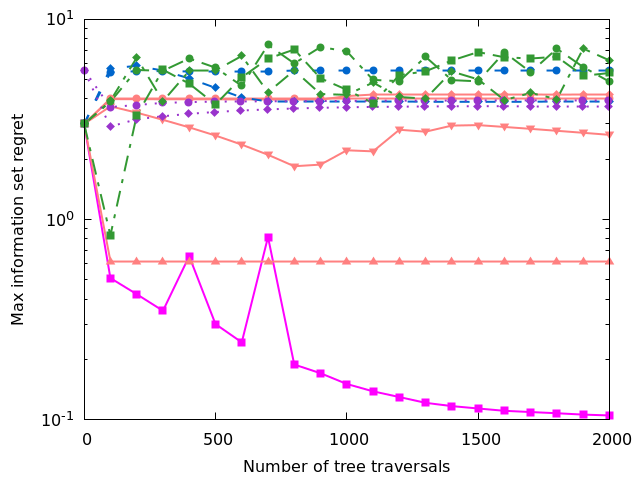}
        \caption{Goofspiel (4) -- Max Regret}
        \label{fig:goofspiel4-max-regret}
    \end{subfigure}

    \vspace{0.3cm} 

    \begin{subfigure}[b]{0.245\linewidth}
        \centering
        \includegraphics[width=\linewidth, height=0.12\textheight, keepaspectratio]{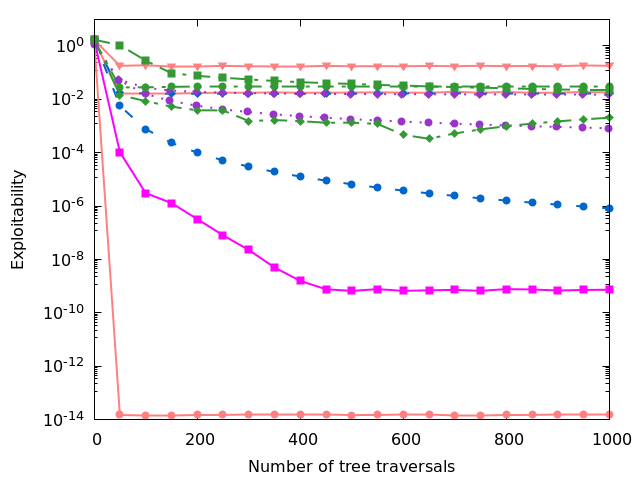}
        \caption{Liar's Dice (5) -- Exp.}
        \label{fig:liars_dice5-exp}
    \end{subfigure}
    \hfill
    \begin{subfigure}[b]{0.245\linewidth}
        \centering
        \includegraphics[width=\linewidth, height=0.12\textheight, keepaspectratio]{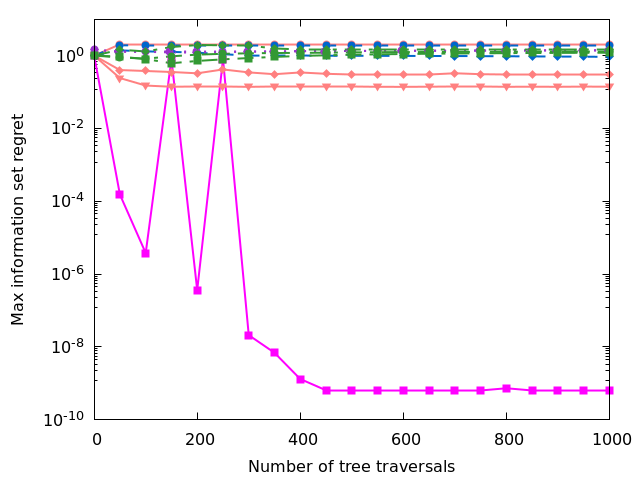}
        \caption{Liar's Dice (5) -- Max Regret}
        \label{fig:liars_dice5-max-regret}
    \end{subfigure}
    \hfill
    \begin{subfigure}[b]{0.245\linewidth}
        \centering
        \includegraphics[width=\linewidth, height=0.12\textheight, keepaspectratio]{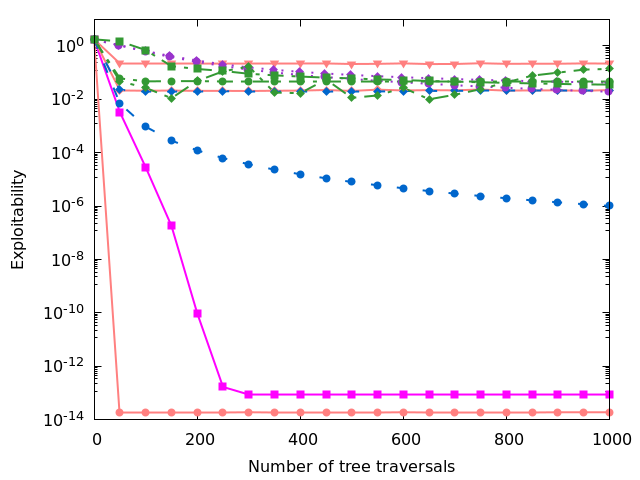}
        \caption{Liar's Dice (6) -- Exp.}
        \label{fig:liars_dice6-exp}
    \end{subfigure}
    \hfill
    \begin{subfigure}[b]{0.245\linewidth}
        \centering
        \includegraphics[width=\linewidth, height=0.12\textheight, keepaspectratio]{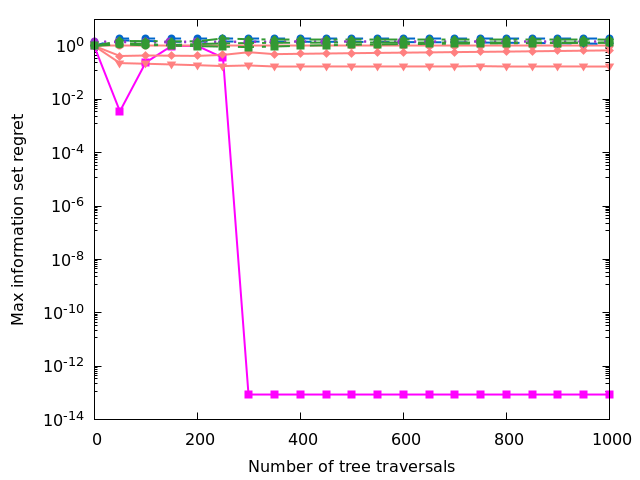}
        \caption{Liar's Dice (6) -- Max Regret}
        \label{fig:liars_dice6-max-regret}
    \end{subfigure}

    \vspace{0.3cm} 

    \begin{subfigure}[b]{0.245\linewidth}
        \centering
        \includegraphics[width=\linewidth, height=0.12\textheight, keepaspectratio]{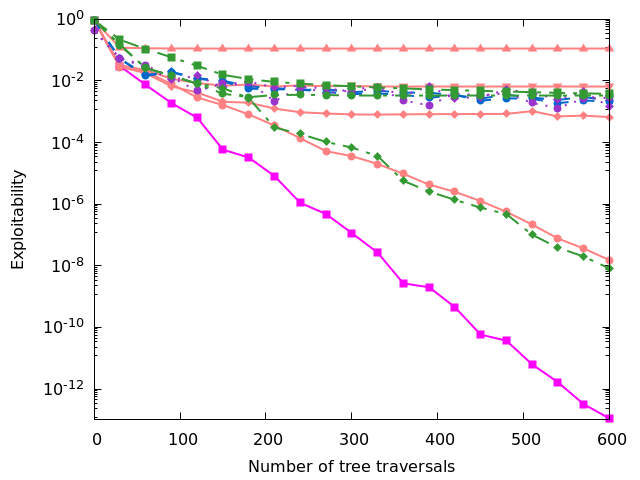}
        \caption{Kuhn Poker (3) -- Exp.}
        \label{fig:kuhn3-exp}
    \end{subfigure}
    \hfill
    \begin{subfigure}[b]{0.245\linewidth}
        \centering
        \includegraphics[width=\linewidth, height=0.12\textheight, keepaspectratio]{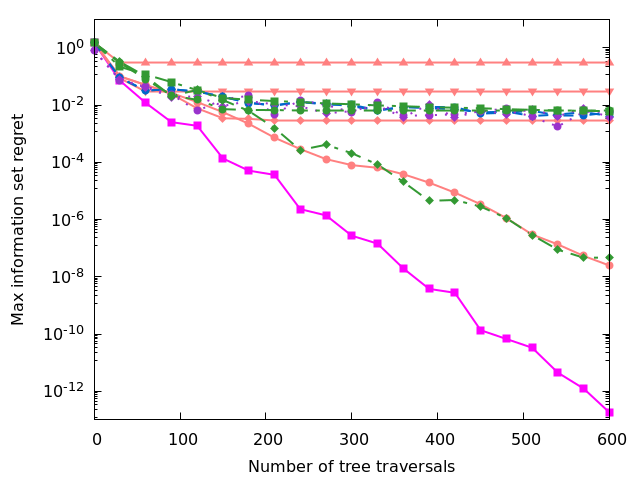}
        \caption{Kuhn Poker (3) -- Max Regret}
        \label{fig:kuhn3-max-regret}
    \end{subfigure}
    \hfill
    \begin{subfigure}[b]{0.48\linewidth}
        \centering
        \includegraphics[width=\linewidth, height=0.12\textheight, keepaspectratio]{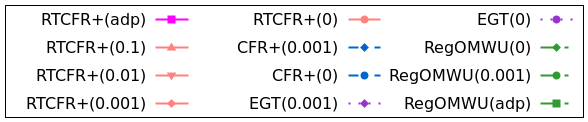}
        \vspace{0.1cm}
        \caption{Legend}
        \label{fig:legend}
    \end{subfigure}
    \vspace{0.3cm}
    \caption{Performance comparison for NE (exploitability) and EFPE (maximum information set regret) computation. ``Game (\(n\))'' denotes variants with \(n\)-rank cards or dice.}
    \label{fig:games-comparison}
    \vspace{0.3cm}
\end{figure*}

We evaluate our proposed methods, RTCFR+ with fixed perturbation (\(\epsilon\)) (Section~\ref{sec:RTCFR_in_perturbed_game}) and RTCFR+ with adaptive perturbation (Section~\ref{subsec:adaptive_perturbation}), both implemented locally with RTRM+, on four two-player zero-sum benchmark games: Kuhn Poker, Leduc Poker, Goofspiel, and Liar’s Dice, with varying rank variants.\footnote{Our code is available at \url{https://github.com/cat1994/RM_EFPE_experiments}.} Experiments were conducted on a PC with a 24-core CPU (up to 5.8 GHz) and 32 GB of memory. Table~\ref{tab:game_info} summarizes the game sizes, including the number of information sets, sequences, and terminal nodes (payoff leaves).
\begin{table}[t]
    \centering
    \caption{Game sizes used in experiments}
    \begin{tabular}{lrrr}
        \toprule
        Game Instance & Information Sets & Sequences & Leaves \\
        \midrule
        Kuhn Poker (3)   & 12    & 26    & 30     \\
        Leduc Poker (3)  & 288   & 674   & 1116   \\
        Leduc Poker (5)  & 780   & 1822  & 5500   \\
        Goofspiel (3)    & 546   & 668   & 216    \\
        Goofspiel (4)    & 34952 & 42658 & 13824  \\
        Liar's Dice (5)  & 5120  & 10232 & 25575  \\
        Liar's Dice (6)  & 24576 & 49142 & 147420 \\
        \bottomrule
    \end{tabular}
    \label{tab:game_info}
\end{table}

We compare RTCFR+ against state-of-the-art algorithms for NE and EFPE computation: unperturbed methods CFR+~\cite{zinkevich2007regret}, EGT~\cite{nesterov2005excessive}, and RegOMWU~\cite{liu2022power} (denoted with (0)), and their perturbed variants CFR+ (\(\epsilon\))~\cite{farina2017regret}, EGT (\(\epsilon\))~\cite{kroer2017smoothing}, RegOMWU (\(\epsilon\))~\cite{liu2022power}, and RegOMWU (adp)~\cite{bernasconi2024learning}. To handle zero reach probabilities in information set regret calculations, we set \(\epsilon = 10^{-15}\) for unperturbed methods. All algorithms use alternating updates for efficiency and evaluate the last strategy, except CFR+, which relies on average-iterate convergence and uses quadratic averaging: \(\frac{6}{T(T+1)(2T+1)} \sum_{t=1}^T t^2 \boldsymbol{q}^t\) for sequence-form strategies \(\{\boldsymbol{q}^1, \dots, \boldsymbol{q}^T\}\).

For fixed-perturbation methods, we test \(\epsilon \in \{0.1, 0.01, 0.001\}\) for RTCFR+ and \(\epsilon = 0.001\) for CFR+, EGT, and RegOMWU. RTCFR+ (adp) achieves a final perturbation \(\epsilon \leq 0.001\) across all games. Hyperparameters for RTCFR+ (adp) are listed in Table~\ref{tab:par_in_EFG}, tuned via grid search over limited iterations. For RTCFR+, we search \((T, \mu) \in \{1, 5, 10, 20, \dots\} \times \{0.1, 0.01, 0.001, \dots\}\) for NE approximation and adaptive parameters \((\epsilon^0, \delta, \gamma) \in \{0.1\} \times \{1, 0.5, 0.1\} \times \{0.95, 0.9, 0.85, \dots\}\). In Liar’s Dice, RTCFR uses \(T = 1\), omitting the RT term (\(\mu (\boldsymbol{x}^{r,n} - \boldsymbol{x}^{t,n}) = 0\)) as \(\boldsymbol{x}^{r,n} = \boldsymbol{x}^{t,n}\), aligning with CFR+ and converging to the NE in 20 iterations. 
Additional details are provided in the Appendix.

\begin{table}[t]
    \centering
    \caption{Hyperparameters for RTCFR+ (adp) in EFGs}
    \begin{tabular}{lrllll}
        \toprule
        Game & \(T\) & \(\mu\) & \(\epsilon^0\) & \(\delta\) & \(\gamma\) \\
        \midrule
        Kuhn Poker (3)   & 5   & 0.01   & 0.1 & 1   & 0.5  \\
        Leduc Poker (3)  & 200 & 0.0001 & 0.01 & 0.02 & 0.1  \\
        Leduc Poker (5)  & 200 & 0.0001 & 0.1 & 0.5  & 0.5  \\
        Goofspiel (3)    & 20  & 0.001  & 0.1 & 0.5  & 0.95 \\
        Goofspiel (4)    & 30  & 0.001  & 0.1 & 0.5  & 0.9  \\
        Liar’s Dice (5)  & 1   & --     & 0.1 & 0.5  & 0.5  \\
        Liar’s Dice (6)  & 1   & --     & 0.1 & 0.5  & 0.5  \\
        \bottomrule
    \end{tabular}
    \label{tab:par_in_EFG}
\end{table}

Figure~\ref{fig:games-comparison} illustrates the results for NE computation (exploitability) and EFPE computation (maximum information set regret). The x-axis denotes game tree traversals (\(T'\)). Due to higher traversal requirements, EGT incurs three times more traversals per iteration, and CFR+ incurs twice the cost, whereas RTCFR+ requires only one traversal per iteration. For RTCFR, \(N\) RT-BSPPs, each with \(T\) traversals, result in \(T' = NT\).

RTCFR+ achieves the lowest exploitability and fastest NE convergence, approaching \(O(1/T)\) in practice, surpassing its theoretical bound. This is due to effective reference strategy updates within tuned RT-BSPP traversals. In fixed-perturbation settings, baseline algorithms struggle to reduce maximum information set regret, except in the simpler Kuhn Poker. For RTCFR+ (\(\epsilon\)) with large \(\epsilon = 0.1\), regret decreases rapidly but exploitability remains high. With small \(\epsilon = 0.001\), regret reduction is slower, with no progress in Goofspiel (3) (Figure~\ref{fig:goofspiel3-max-regret}). The trade-off setting (\(\epsilon = 0.01\)) performs best overall, confirming the challenges noted by \citet{farina2017regret, kroer2017smoothing}.

RTCFR+ (adp) effectively balances NE and EFPE convergence across all games, achieving faster rates and tighter bounds. For example, in Leduc Poker (5) (Figures~\ref{fig:l5-exp}, \ref{fig:l5-max-regret}), RTCFR+ (adp) starts with \(\epsilon = 0.1\), reducing it by \(\gamma = 0.5\) upon reaching the \(\delta\)-ISNE threshold, leading to rapid exploitability reduction. Fluctuations in maximum information set regret reflect RTCFR+’s asymptotic last-iterate convergence, which is less smooth than averaging methods. Perturbation reductions shift the RT-BSPP saddle point, requiring RTCFR to converge to the new \(\epsilon\)-EFPE.

The RTCFR algorithm demonstrates high efficiency in Goofspiel and Liar’s Dice, requiring fewer traversals compared to traditional methods. In particular, for Liar’s Dice (6), despite being the largest game considered, RTCFR+ converges rapidly. This is achieved by reducing to the original CFR+ algorithm when setting \( T = 1 \), which nullifies the RT-term (\( \mu (\boldsymbol{x}^{r,n} - \boldsymbol{x}^{t,n}) = 0 \)). This behavior is attributed to the game possessing a strict NE, where the original CFR also achieves last-iterate convergence \cite{cai2023last}. Our adaptive method mitigates this, achieving exploitability comparable to unperturbed settings and an ISNE with regret below \( 10^{-10} \).

\section{Conclusion and Future Work}\label{sec:conclusion}

This paper presents an efficient last-iterate convergence framework for computing Extensive-Form Perfect Equilibria (EFPE) in Extensive-Form Games (EFGs). We propose \emph{Reward Transformation Counterfactual Regret Minimization (RTCFR)}, which extends the Reward Transformation framework with a perturbed game formulation and proves its asymptotic last-iterate convergence to an \(\epsilon\)-EFPE. To address challenges in low-reach-probability information sets, we introduce an adaptive perturbation strategy guided by the novel \emph{Information Set Nash Equilibrium (ISNE)} metric. This approach dynamically adjusts perturbations to balance Nash Equilibrium (NE) approximation accuracy and EFPE convergence speed. Experimental results across benchmark games demonstrate that RTCFR+ outperforms state-of-the-art methods, including CFR+, EGT, and RegOMWU, in both NE and EFPE computation.

Our findings underscore the potential of last-iterate convergence and adaptive perturbations for advancing game-theoretic solutions in large-scale and real-world applications. By moving beyond average-iterate methods, RTCFR can integrate with deep learning frameworks, as seen in Deep Nash~\cite{perolat2022mastering}. Future work could explore deep reinforcement learning, such as actor-critic methods, to model irrational behaviors in constrained environments, offering theoretical insights and practical benefits. These directions merit further investigation.



\begin{ack}
This research was funded by Guangdong Provincial Key Laboratory of Novel Security Intelligence Technologies (2022B1212010005), National Natural Science Foundation of China (No. 62376073), Natural Science Foundation of Guang-dong (No. 2024A1515030024), the Colleges and Universities Stable Support Project of Shenzhen (No. GXWD20220811173149002), Shenzhen Science and Technology Program under Grant (No. KJZD20230923114213027).
\end{ack}



\bibliography{m2455}
\newpage
\onecolumn
\appendix
\section{Proof of Theorem~\ref{thm:best_iterate_converge_of_RTRM}}\label{sec:proof_RTRM_convergence}

\begin{proof}
The proof proceeds in two steps: (1) establishing best-iterate convergence in the coordinate space of the transformed game \(\hat{\Gamma}\) using Online Mirror Descent (OMD), and (2) mapping the result to the original strategy space of the \(\epsilon\)-perturbed game \(\Gamma^\epsilon\), preserving perturbation constraints.

\textbf{Step 1: Convergence in the Transformed Game \(\hat{\Gamma}\).} 
Following \citet{meng2023efficient}, we reformulate Reward Transformation Regret Minimization (RTRM) dynamics as an OMD process, leveraging the equivalence between Regret Matching (RM) and OMD \cite{farina2021faster}. The RTRM dynamics are defined as:
\begin{align}
    \boldsymbol{v}_i^t &= \boldsymbol{U}_i \boldsymbol{x}_{-i}^t + \mu (\boldsymbol{x}_i^{r} - \boldsymbol{x}_i^t), \label{eq:RTRM_v} \\
    \boldsymbol{r}_i^t &= \boldsymbol{v}_i^t - \langle \boldsymbol{v}_i^t, \boldsymbol{x}_i^t \rangle \boldsymbol{1}, \label{eq:RTRM_r} \\
    \boldsymbol{R}_i^t &= \begin{cases}
        \boldsymbol{R}_i^{t-1} + \boldsymbol{r}_i^t, & \text{if RM}, \\
        [\boldsymbol{R}_i^{t-1} + \boldsymbol{r}_i^t]^+, & \text{if RM+}, \\
        \frac{t^\alpha}{t^\alpha + 1} [\boldsymbol{R}_i^{t-1} + \boldsymbol{r}_i^t]^+ + \frac{t^\beta}{t^\beta + 1} [\boldsymbol{R}_i^{t-1} + \boldsymbol{r}_i^t]^-, & \text{if DRM},
    \end{cases} \label{eq:RTRM_R} \\
    \boldsymbol{x}_i^{t+1} &= \begin{cases}
        \frac{[\boldsymbol{R}_i^t]^+}{\|[\boldsymbol{R}_i^t]^+\|_1}, & \text{if } \|[\boldsymbol{R}_i^t]^+\|_1 > 0, \\
        \frac{\boldsymbol{1}}{|A_i|}, & \text{otherwise},
    \end{cases} \label{eq:RTRM_x}
\end{align}
where \([\cdot]^- = \min(\cdot, \boldsymbol{0})\), and \(\alpha, \beta \in \mathbb{R}\) are discount parameters for positive and negative cumulative regrets, respectively.

In Online Mirror Descent (OMD), given a strategy \(\boldsymbol{x}_i^t \in \mathcal{X}_i\) at iteration \(t\) and an observed loss vector \(\boldsymbol{\ell}_i^t\), the strategy is updated as:
\[
    \boldsymbol{x}_i^{t+1} = \argmin_{\boldsymbol{x}_i \in \mathcal{X}_i} \left\{ \langle \boldsymbol{\ell}_i^t, \boldsymbol{x}_i \rangle + \frac{1}{\eta} D_{\psi}(\boldsymbol{x}_i, \boldsymbol{x}_i^t) \right\},
\]
where \(\eta > 0\) is the step size, and \(D_{\psi}(\boldsymbol{x}, \boldsymbol{y}) = \psi(\boldsymbol{x}) - \psi(\boldsymbol{y}) - \langle \nabla \psi(\boldsymbol{y}), \boldsymbol{x} - \boldsymbol{y} \rangle\) is the Bregman divergence with a distance-generating function \(\psi\). When \(\psi(\boldsymbol{x}) = \frac{1}{2} \|\boldsymbol{x}\|_2^2\), OMD reduces to Gradient Descent Ascent (GDA).

We rely on two key lemmas to establish convergence:

\begin{lemma}[Saddle-Point Metric Subregularity \cite{wei2020linear}]\label{lemma:MS}
    Let \(\boldsymbol{x}^* \in \mathcal{X}^*\) be the Nash equilibrium of the game. There exists a constant \(C > 0\), dependent only on the game, such that for any strategy \(\boldsymbol{x} \in \mathcal{X} \setminus \mathcal{X}^*\):
    \[
        \|\boldsymbol{x}^* - \boldsymbol{x}\|_2 \leq \frac{\text{Exp}(\boldsymbol{x})}{C}.
    \]
\end{lemma}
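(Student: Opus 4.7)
The plan is to derive this bound as a direct application of Hoffman's global error bound for polyhedra. In a bilinear saddle-point problem with polyhedral strategy spaces $\mathcal{X}_1, \mathcal{X}_2$ (in our setting, the sequence-form polytopes $\mathcal{Q}_i$, possibly with the perturbation constraints $\boldsymbol{q}_i \ge \boldsymbol{l}$), the Nash-equilibrium set $\mathcal{X}^*$ is cut out by finitely many linear inequalities, and $\text{Exp}(\boldsymbol{x})$ precisely measures the residual violation of those inequalities. Hoffman's lemma will then produce a uniform constant $C > 0$ such that the Euclidean distance from any feasible $\boldsymbol{x}$ to $\mathcal{X}^*$ is bounded by $\text{Exp}(\boldsymbol{x})/C$.

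I would proceed in three steps. First, I would express $\mathcal{X}^*$ as the solution set of a polyhedral system, using LP duality, or equivalently a vertex enumeration of $\mathcal{X}_1$ and $\mathcal{X}_2$, to replace the inner $\max_{\boldsymbol{x}_1'}$ and $\min_{\boldsymbol{x}_2'}$ in the equilibrium conditions by finitely many linear inequalities in $(\boldsymbol{x}_1, \boldsymbol{x}_2)$. Second, I would use the decomposition $\text{Exp}(\boldsymbol{x}) = \bigl[\max_{\boldsymbol{x}_1'} u(\boldsymbol{x}_1', \boldsymbol{x}_2) - u(\boldsymbol{x}_1, \boldsymbol{x}_2)\bigr] + \bigl[u(\boldsymbol{x}_1, \boldsymbol{x}_2) - \min_{\boldsymbol{x}_2'} u(\boldsymbol{x}_1, \boldsymbol{x}_2')\bigr]$ to write the exploitability as a sum of two nonnegative best-response gaps, and observe that each vertex-indexed inequality violation is bounded above by the corresponding bracket and therefore by $\text{Exp}(\boldsymbol{x})$. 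Third, I would invoke Hoffman's lemma on this combined linear system to obtain a constant $C$ depending only on its coefficient matrix, and take $\boldsymbol{x}^*$ to be the Euclidean projection of $\boldsymbol{x}$ onto $\mathcal{X}^*$ to conclude $\|\boldsymbol{x}^* - \boldsymbol{x}\|_2 \le \text{Exp}(\boldsymbol{x})/C$.

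The main obstacle will be ensuring that $C$ is uniform, i.e., depends only on the game data (the payoff matrix $\boldsymbol{U}$ and the polytope descriptions of $\mathcal{X}_1, \mathcal{X}_2$) rather than on the probe point $\boldsymbol{x}$. This requires careful handling of the piecewise-linear nature of exploitability, since the binding vertex attaining $\max_{\boldsymbol{x}_1'} u(\boldsymbol{x}_1', \boldsymbol{x}_2)$ varies with $\boldsymbol{x}_2$; the resolution is to include \emph{all} vertex-indexed inequalities simultaneously in the Hoffman system, yielding a single fixed polyhedron and hence a single fixed Hoffman constant. A secondary issue is transferring the bound to the perturbed game $\Gamma^\epsilon$: replacing $\mathcal{Q}_i$ by $\mathcal{Q}_i^\epsilon$ keeps the NE set polyhedral and the argument goes through, though the resulting constant $C$ may then depend on $\epsilon$ through the facet normals of $\mathcal{Q}_i^\epsilon$.
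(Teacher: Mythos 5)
The paper does not actually prove this lemma; it is imported verbatim (as a black box) from Wei et al.\ (2020), where the polyhedral case of saddle-point metric subregularity is established by exactly the kind of polyhedral error-bound argument you sketch. Your route via Hoffman's lemma is therefore the canonical one and is sound: linearize the equilibrium conditions with the game value \(v^*\) and the vertex sets of \(\mathcal{X}_1,\mathcal{X}_2\), bound each residual by the exploitability, and invoke Hoffman on the fixed combined system. Two small points deserve tightening. First, your intermediate claim that each vertex-indexed violation is bounded by ``the corresponding bracket'' is not quite right when \(u(\boldsymbol{x}_1,\boldsymbol{x}_2) > v^*\); the clean argument is \(\max_{\boldsymbol{x}_1'} u(\boldsymbol{x}_1',\boldsymbol{x}_2) - v^* \le \max_{\boldsymbol{x}_1'} u(\boldsymbol{x}_1',\boldsymbol{x}_2) - \min_{\boldsymbol{x}_2'} u(\boldsymbol{x}_1,\boldsymbol{x}_2') = \text{Exp}(\boldsymbol{x})\), using \(v^* \ge \min_{\boldsymbol{x}_2'} u(\boldsymbol{x}_1,\boldsymbol{x}_2')\), so each violation is bounded by the full exploitability rather than by one bracket. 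Second, you correctly observe that the statement only makes sense with \(\boldsymbol{x}^*\) taken as the projection of \(\boldsymbol{x}\) onto \(\mathcal{X}^*\) (i.e., it is a distance-to-set bound); the paper's phrasing ``the Nash equilibrium'' glosses over this, and your reading is the right one. Your closing caveat that the constant \(C\) may degrade with \(\epsilon\) through the facet description of \(\mathcal{Q}_i^\epsilon\) is also a genuine and relevant observation, since the paper applies the lemma inside the perturbed game with a shrinking \(\epsilon\) and treats \(C\) as fixed.
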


\begin{lemma}[OMD Equivalence for RTRM \cite{farina2021faster}]\label{lem:omd_equivalence}
    RTRM updates are equivalent to GDA updates:
    \[
        \boldsymbol{\theta}_i^{t+1,n} \in \argmin_{\boldsymbol{\theta}_i \in \mathbb{R}^{|A_i|}} \left\{ \langle -\boldsymbol{r}_i(\boldsymbol{\theta}_i^{t,n}), \boldsymbol{\theta}_i \rangle + \frac{1}{\eta} D_{\psi}(\boldsymbol{\theta}_i, \boldsymbol{\theta}_i^{t,n}) \right\},
    \]
    where \(\eta > 0\), \(\boldsymbol{r}_i(\boldsymbol{\theta}_i^{t,n}) = \langle \boldsymbol{\ell}_i^{t,n}, \boldsymbol{x}_i^{t,n} \rangle \boldsymbol{1} - \boldsymbol{\ell}_i^{t,n}\), \(\boldsymbol{\ell}_i^{t,n} = \boldsymbol{g}_i^{t,n} + \mu (\boldsymbol{x}_i^{t,n} - \boldsymbol{x}_i^{r,n})\), \(\boldsymbol{g}_i^{t,n} = -\boldsymbol{U}_i \boldsymbol{x}_{-i}^{t,n}\), and \(\psi(\boldsymbol{\theta}) = \frac{1}{2} \|\boldsymbol{\theta}\|_2^2\). The closed-form solution is:
    \[
        \boldsymbol{\theta}_i^{t+1,n} = \begin{cases}
            \boldsymbol{\theta}_i^{t,n} + \eta \boldsymbol{r}_i(\boldsymbol{\theta}_i^{t,n}), & \text{for RM}, \\
            [\boldsymbol{\theta}_i^{t,n} + \eta \boldsymbol{r}_i(\boldsymbol{\theta}_i^{t,n})]^+, & \text{for RM+}, \\
            \frac{t^\alpha}{t^\alpha + 1} [\boldsymbol{\theta}_i^{t,n} + \eta \boldsymbol{r}_i(\boldsymbol{\theta}_i^{t,n})]^+ + \frac{t^\beta}{t^\beta + 1} [\boldsymbol{\theta}_i^{t,n} + \eta \boldsymbol{r}_i(\boldsymbol{\theta}_i^{t,n})]^-, & \text{for DRM}.
        \end{cases}
    \]
\end{lemma}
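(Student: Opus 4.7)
The plan is to verify the OMD equivalence of the RTRM dynamics by first solving the GDA subproblem in closed form for each of the three regret-matching variants, then identifying the OMD loss vector $\boldsymbol{\ell}_i^{t,n}$ with the RTRM value $-\boldsymbol{v}_i^{t,n}$ so that the two notions of instantaneous regret coincide, and finally reading off the cumulative-iterate recursion \eqref{eq:RTRM_R} as the closed form of the GDA update with step size $\eta$.

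First, I fix $\psi(\boldsymbol{\theta}) = \tfrac{1}{2}\|\boldsymbol{\theta}\|_2^2$, so that the Bregman divergence reduces to $D_\psi(\boldsymbol{\theta},\boldsymbol{\theta}') = \tfrac{1}{2}\|\boldsymbol{\theta}-\boldsymbol{\theta}'\|_2^2$. The first-order optimality conditions of the unconstrained argmin then give the linear update $\boldsymbol{\theta}_i^{t+1,n} = \boldsymbol{\theta}_i^{t,n} + \eta\,\boldsymbol{r}_i(\boldsymbol{\theta}_i^{t,n})$, which is the RM case. For RM+, restricting the feasible set to $\mathbb{R}_{\geq 0}^{|A_i|}$ introduces a Euclidean projection onto the non-negative orthant, yielding $[\boldsymbol{\theta}_i^{t,n} + \eta\,\boldsymbol{r}_i(\boldsymbol{\theta}_i^{t,n})]^+$. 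For DRM, I would split the vector $\boldsymbol{\theta}_i^{t,n} + \eta\,\boldsymbol{r}_i(\boldsymbol{\theta}_i^{t,n})$ into its positive and negative parts, which have disjoint coordinate supports, and apply the coordinatewise discount factors $t^\alpha/(t^\alpha+1)$ and $t^\beta/(t^\beta+1)$; because the two parts do not interact, the resulting weighted combination is exactly the minimizer of a modified OMD step that penalizes positive and negative excursions at different rates.

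Second, I reconcile the loss and regret definitions. Substituting $\boldsymbol{g}_i^{t,n} = -\boldsymbol{U}_i \boldsymbol{x}_{-i}^{t,n}$ yields
\[
\boldsymbol{\ell}_i^{t,n} \;=\; -\boldsymbol{U}_i \boldsymbol{x}_{-i}^{t,n} + \mu(\boldsymbol{x}_i^{t,n} - \boldsymbol{x}_i^{r,n}) \;=\; -\bigl(\boldsymbol{U}_i \boldsymbol{x}_{-i}^{t,n} + \mu(\boldsymbol{x}_i^{r,n} - \boldsymbol{x}_i^{t,n})\bigr) \;=\; -\boldsymbol{v}_i^{t,n},
\]
with $\boldsymbol{v}_i^{t,n}$ the RTRM value from \eqref{eq:RTRM_v}. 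Plugging this identity into the OMD instantaneous regret $\boldsymbol{r}_i(\boldsymbol{\theta}_i^{t,n}) = \langle \boldsymbol{\ell}_i^{t,n}, \boldsymbol{x}_i^{t,n}\rangle\boldsymbol{1} - \boldsymbol{\ell}_i^{t,n}$ recovers $\boldsymbol{v}_i^{t,n} - \langle \boldsymbol{v}_i^{t,n}, \boldsymbol{x}_i^{t,n}\rangle\boldsymbol{1}$, which is precisely the RTRM regret $\boldsymbol{r}_i^{t,n}$ of \eqref{eq:RTRM_r}. Thus the two dynamics agree on the object driving the update.

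Third, setting $\eta = 1$ and identifying the OMD iterate $\boldsymbol{\theta}_i^{t,n}$ with the RTRM cumulative regret $\boldsymbol{R}_i^{t-1,n}$, a short induction on $t$ shows that each of the three closed-form GDA recursions derived in step one reproduces the corresponding branch of \eqref{eq:RTRM_R}. Applying the normalization rule \eqref{eq:RTRM_x} to both sides of this identification then yields the same strategy $\boldsymbol{x}_i^{t+1,n}$ under both dynamics, which completes the equivalence. The main obstacle I anticipate is the DRM branch: the $t$-dependent discount factors break the symmetry that made the RM and RM+ inductions immediate, and one must verify that the weighted combination of $[\cdot]^+$ and $[\cdot]^-$ genuinely arises as the closed-form minimizer of a single OMD step rather than being an ad hoc hybrid. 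I would handle this by analyzing the first-order conditions on the positive and negative orthants separately and exploiting the disjoint-support property of the two parts, so that the weighted sum can be interpreted as an OMD step with a piecewise regularizer whose sub-gradient structure yields exactly the stated interpolation.
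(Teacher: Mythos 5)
The paper does not actually prove this lemma: it is imported verbatim from \citet{farina2021faster} and used as a black box inside the proof of Theorem~\ref{thm:best_iterate_converge_of_RTRM}, so there is no in-paper argument to compare against. Judged on its own terms, your reconstruction gets the essential content right. The key step --- substituting \(\boldsymbol{g}_i^{t,n} = -\boldsymbol{U}_i \boldsymbol{x}_{-i}^{t,n}\) to obtain \(\boldsymbol{\ell}_i^{t,n} = -\boldsymbol{v}_i^{t,n}\) and hence \(\boldsymbol{r}_i(\boldsymbol{\theta}_i^{t,n}) = \boldsymbol{v}_i^{t,n} - \langle \boldsymbol{v}_i^{t,n}, \boldsymbol{x}_i^{t,n}\rangle\boldsymbol{1} = \boldsymbol{r}_i^{t,n}\) --- is exactly the identification that makes the RT term ride along for free in the Farina et al.\ equivalence, and your RM and RM+ derivations (unconstrained quadratic minimization versus Euclidean projection onto \(\mathbb{R}_{\geq 0}^{|A_i|}\)) are the standard and correct arguments. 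Your observation that the strategies produced are invariant to the choice of \(\eta\) (so that setting \(\eta = 1\) loses no generality) is also sound, since the normalization in Equation~\eqref{eq:RTRM_x} is scale-invariant.

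The one genuine gap is the DRM branch, and you have correctly identified where it lies but not closed it. As stated, the argmin is over all of \(\mathbb{R}^{|A_i|}\) with the Euclidean regularizer, and that problem has a \emph{unique} minimizer, namely the RM update; neither the RM+ nor the DRM expression can literally be ``the closed-form solution'' of that same problem. For RM+ the fix is standard (change the feasible set to the nonnegative orthant), but for DRM the proposed fix --- a ``piecewise regularizer whose sub-gradient structure yields exactly the stated interpolation'' --- is asserted rather than exhibited, and it is not obvious such a regularizer exists: the DRM update is more naturally read as a post-hoc, iteration-dependent rescaling of the positive and negative parts of the OMD iterate, not as the minimizer of a single Bregman proximal step. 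If you want the DRM case to carry weight in the downstream convergence argument (Theorem~\ref{thm:best_iterate_converge_of_RTRM} is stated for all three variants), you need to either construct that regularizer explicitly and verify its first-order conditions on each orthant, or restate the lemma so that the DRM branch is an approximate/perturbed OMD step and track the extra error term through the regret bound. This looseness is arguably inherited from the lemma statement itself, but a complete proof cannot leave it at the level of an anticipated obstacle.
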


Using Lemma~\ref{lem:omd_equivalence}, we establish that RTRM corresponds to GDA, a special case of OMD with a Euclidean regularizer \cite{meng2023efficient, farina2021faster}. By convex analysis (Appendix E of \citet{meng2023efficient}), we bound the Bregman divergence between the saddle point \(\hat{\boldsymbol{x}}^{*,n} = \frac{\boldsymbol{\theta}^{*,n}}{\|\boldsymbol{\theta}^{*,n}\|_1}\) and the strategy \(\hat{\boldsymbol{x}}^{t,n} = \frac{\boldsymbol{\theta}^{t,n}}{\|\boldsymbol{\theta}^{t,n}\|_1}\):
\[
    \sum_{t=1}^T C_1 D_{\psi}(\hat{\boldsymbol{x}}^{*,n}, \hat{\boldsymbol{x}}^{t,n}) \leq C_2,
\]
where \(C_1 = 2\eta \mu - (\eta C_0)^2\), \(C_2 = D_{\psi}(\boldsymbol{\theta}^{1,n,*}, \boldsymbol{\theta}^{1,n}) + \eta \langle -\boldsymbol{r}_i(\boldsymbol{\theta}^{*,n}), \boldsymbol{\theta}^{1,n} \rangle\), and \(C_0 = 2P^2 + 3\mu P + P + \mu\), with \(P = \max(|A_1|, |A_2|)\). Here, \(\boldsymbol{\theta}^{1,n,*}\) is the projection of \(\boldsymbol{\theta}^{1,n}\) onto the saddle-point ray.

Applying the Mean Value Theorem, there exists \(t \in \{1, \dots, T\}\) such that:
\begin{equation}
    \|\hat{\boldsymbol{x}}^{*,n} - \hat{\boldsymbol{x}}^{t,n}\|_2 \leq \sqrt{\frac{2C_2}{C_1 T}},
    \label{eq:best_rtrm}
\end{equation}
establishing a convergence rate of \(O(1/\sqrt{T})\).

\textbf{Step 2: Mapping to the Perturbed Game \(\Gamma^\epsilon\).}
We map the convergence result to the original strategy space of \(\Gamma^\epsilon\):
\[
    \|\boldsymbol{x}^{*,n} - \boldsymbol{x}^{t,n}\|_2 = \|\boldsymbol{B} \hat{\boldsymbol{x}}^{*,n} - \boldsymbol{B} \hat{\boldsymbol{x}}^{t,n}\|_2 \leq \|\boldsymbol{B}\|_2 \|\hat{\boldsymbol{x}}^{*,n} - \hat{\boldsymbol{x}}^{t,n}\|_2 \leq \sqrt{\frac{2C_2}{C_1 T}},
\]
where \(\|\boldsymbol{B}\|_2 = \sqrt{\lambda_{\max}(\boldsymbol{B}^\top \boldsymbol{B})} \leq 1\) for \(\epsilon \geq 0\), as the basis matrix \(\boldsymbol{B}\) is constructed to preserve perturbation constraints. Thus, RTRM achieves best-iterate convergence to the saddle point of the \(\epsilon\)-perturbed RT-BSPP at a rate of \(O(1/\sqrt{T})\).
\end{proof}

\section{Proof of Theorem~\ref{thm:asymptotic_last_iterate_converge_of_RTRM}}\label{sec:proof_asymptotic_RTRM}
\begin{proof}
The proof proceeds in two steps: (1) establishing that the saddle point \(\boldsymbol{x}^{*,n}\) of the \(n\)-th RT-BSPP is closer to the Nash Equilibrium (NE) \(\boldsymbol{x}^*\) of \(\Gamma^\epsilon\) than the reference strategy \(\boldsymbol{x}^{r,n}\), and (2) proving that the sequence \(\{\boldsymbol{x}^{*,n}\}\) converges to \(\boldsymbol{x}^*\), and combining with Theorem~\ref{thm:best_iterate_converge_of_RTRM}, showing that the strategy \(\boldsymbol{x}^{t,n}\) asymptotically converges to \(\boldsymbol{x}^*\).

\begin{lemma}\label{le:relation_of_3_points}
Let \(\boldsymbol{x}^{r,n}\) be the reference strategy, \(\mu > 0\) the RT weight, \(\boldsymbol{x}^{*,n} \in \mathcal{X}^{*,n}\) the saddle point of the \(n\)-th RT-BSPP, and \(\boldsymbol{x}^* \in \mathcal{X}^*\) the NE of \(\Gamma^\epsilon\). Provided \(\boldsymbol{x}^{r,n} \neq \boldsymbol{x}^{*,n} \neq \boldsymbol{x}^*\), the following hold:
\begin{equation}
    \|\boldsymbol{x}^* - \boldsymbol{x}^{*,n}\|_2 \leq \|\boldsymbol{x}^* - \boldsymbol{x}^{r,n}\|_2,
    \label{eq:saddle_reference_NE}
\end{equation}
and
\begin{equation}
    \|\boldsymbol{x}^* - \boldsymbol{x}^{r,n}\|_2 - \|\boldsymbol{x}^* - \boldsymbol{x}^{*,n}\|_2 \geq \frac{C^2}{\mu^2 (\|\boldsymbol{x}^* - \boldsymbol{x}^{r,n}\|_2 + \|\boldsymbol{x}^* - \boldsymbol{x}^{*,n}\|_2)},
    \label{eq:bound_reference_saddle}
\end{equation}
where \(C > 0\) is a constant depending only on the game.
\end{lemma}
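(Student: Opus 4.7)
The plan is to derive both inequalities from the first-order (variational) characterizations of the two saddle points — $\boldsymbol{x}^{*,n}$ in the strongly concave–convex RT-BSPP and $\boldsymbol{x}^*$ in the bilinear zero-sum game $\Gamma^\epsilon$ — combined with Lemma~\ref{lemma:MS}. Abbreviate the game operator by $F(\boldsymbol{x}) := (-\boldsymbol{U}\boldsymbol{x}_2,\,\boldsymbol{U}^\top \boldsymbol{x}_1)$; bilinearity of the payoff forces the cyclic identity $\langle F(\boldsymbol{x}) - F(\boldsymbol{y}),\,\boldsymbol{x} - \boldsymbol{y}\rangle = 0$, and the two VIs, valid on $\mathcal{X}^\epsilon$, are
\[
    \langle F(\boldsymbol{x}^*),\,\boldsymbol{x} - \boldsymbol{x}^*\rangle \geq 0, \qquad \langle F(\boldsymbol{x}^{*,n}) + \mu(\boldsymbol{x}^{*,n} - \boldsymbol{x}^{r,n}),\,\boldsymbol{x} - \boldsymbol{x}^{*,n}\rangle \geq 0.
\]

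For inequality~\eqref{eq:saddle_reference_NE}, I would substitute $\boldsymbol{x} = \boldsymbol{x}^*$ into the RT VI and $\boldsymbol{x} = \boldsymbol{x}^{*,n}$ into the NE VI, then sum the two: the cyclic identity cancels the $F$-terms outright and leaves $\mu\langle \boldsymbol{x}^{*,n} - \boldsymbol{x}^{r,n},\,\boldsymbol{x}^* - \boldsymbol{x}^{*,n}\rangle \geq 0$. Expanding $\boldsymbol{x}^{*,n} - \boldsymbol{x}^{r,n} = (\boldsymbol{x}^{*,n} - \boldsymbol{x}^*) - (\boldsymbol{x}^{r,n} - \boldsymbol{x}^*)$ in this inner product rewrites it as $\langle \boldsymbol{x}^{r,n} - \boldsymbol{x}^*,\,\boldsymbol{x}^{*,n} - \boldsymbol{x}^*\rangle \geq \|\boldsymbol{x}^{*,n} - \boldsymbol{x}^*\|_2^2$, and Cauchy–Schwarz then yields \eqref{eq:saddle_reference_NE} after dividing by $\|\boldsymbol{x}^{*,n} - \boldsymbol{x}^*\|_2 > 0$.

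For inequality~\eqref{eq:bound_reference_saddle}, I would start from the polarization identity
\[
    \|\boldsymbol{x}^* - \boldsymbol{x}^{r,n}\|_2^2 - \|\boldsymbol{x}^* - \boldsymbol{x}^{*,n}\|_2^2 = \|\boldsymbol{x}^{*,n} - \boldsymbol{x}^{r,n}\|_2^2 + 2\langle \boldsymbol{x}^{*,n} - \boldsymbol{x}^{r,n},\,\boldsymbol{x}^* - \boldsymbol{x}^{*,n}\rangle,
\]
noting that Step~1 already makes the cross term nonnegative, so it suffices to bound $\|\boldsymbol{x}^{*,n} - \boldsymbol{x}^{r,n}\|_2$ from below. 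To this end, I would max out the RT VI over $\boldsymbol{x} \in \mathcal{X}^\epsilon$ to get $\text{Exp}(\boldsymbol{x}^{*,n}) = \max_{\boldsymbol{x}}\langle F(\boldsymbol{x}^{*,n}),\,\boldsymbol{x}^{*,n} - \boldsymbol{x}\rangle \leq \mu D\,\|\boldsymbol{x}^{*,n} - \boldsymbol{x}^{r,n}\|_2$, where $D$ is the diameter of $\mathcal{X}^\epsilon$, and pair this with Lemma~\ref{lemma:MS}'s matching lower bound $\text{Exp}(\boldsymbol{x}^{*,n}) \geq C\|\boldsymbol{x}^* - \boldsymbol{x}^{*,n}\|_2$. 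Dividing yields $\|\boldsymbol{x}^{*,n} - \boldsymbol{x}^{r,n}\|_2 \geq C\|\boldsymbol{x}^* - \boldsymbol{x}^{*,n}\|_2/(\mu D)$, which is strictly positive by the hypothesis $\boldsymbol{x}^{*,n}\neq\boldsymbol{x}^*$; absorbing $D$ and the strictly positive gap $\|\boldsymbol{x}^* - \boldsymbol{x}^{*,n}\|_2$ into the game-dependent constant gives a bound of the form $\|\boldsymbol{x}^{*,n} - \boldsymbol{x}^{r,n}\|_2^2 \geq C^2/\mu^2$, and factoring $a^2 - b^2 = (a-b)(a+b)$ with $a = \|\boldsymbol{x}^* - \boldsymbol{x}^{r,n}\|_2$ and $b = \|\boldsymbol{x}^* - \boldsymbol{x}^{*,n}\|_2$ then produces \eqref{eq:bound_reference_saddle}.

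The hard part will be Step~2. Step~1 is a short non-expansion argument that follows cleanly from the two VIs plus the bilinear cyclic identity and never touches Lemma~\ref{lemma:MS}. In Step~2, however, the two-sided estimate $C\|\boldsymbol{x}^* - \boldsymbol{x}^{*,n}\|_2 \leq \text{Exp}(\boldsymbol{x}^{*,n}) \leq \mu D\,\|\boldsymbol{x}^{*,n} - \boldsymbol{x}^{r,n}\|_2$ needs to be collapsed into the clean $C^2/\mu^2$ scaling in the statement; this forces a careful accounting of the polytope diameter and of the non-degenerate gap $\|\boldsymbol{x}^* - \boldsymbol{x}^{*,n}\|_2$, both of which must be absorbed into the single game-dependent constant $C$. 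The assumption $\boldsymbol{x}^{r,n}\neq\boldsymbol{x}^{*,n}\neq\boldsymbol{x}^*$ is precisely what prevents the right-hand side of \eqref{eq:bound_reference_saddle} from collapsing to zero, and it is the only place the hypothesis is actually used.
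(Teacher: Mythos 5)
Your Step~1 is correct and is essentially the paper's own argument: both reduce the saddle-point optimality conditions to $\mu\langle \boldsymbol{x}^{*,n}-\boldsymbol{x}^{r,n},\,\boldsymbol{x}^*-\boldsymbol{x}^{*,n}\rangle \geq 0$ (the paper obtains the discarded nonnegative term from the Nash property of $\boldsymbol{x}^*$, which is exactly the information your cyclic identity plus the NE variational inequality encodes), and \eqref{eq:saddle_reference_NE} then follows from the polarization identity. Your identity $\text{Exp}(\boldsymbol{x}^{*,n}) = \max_{\boldsymbol{x}}\langle F(\boldsymbol{x}^{*,n}),\boldsymbol{x}^{*,n}-\boldsymbol{x}\rangle$ is also correct.

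Step~2, however, has a genuine gap exactly where you flagged the difficulty. After bounding $\text{Exp}(\boldsymbol{x}^{*,n}) \leq \mu D\,\|\boldsymbol{x}^{*,n}-\boldsymbol{x}^{r,n}\|_2$ via the diameter $D$ and pairing with Lemma~\ref{lemma:MS}, you get $\|\boldsymbol{x}^{*,n}-\boldsymbol{x}^{r,n}\|_2 \geq C\|\boldsymbol{x}^*-\boldsymbol{x}^{*,n}\|_2/(\mu D)$ and propose to ``absorb'' the factor $\|\boldsymbol{x}^*-\boldsymbol{x}^{*,n}\|_2$ into the game-dependent constant. That factor is not a game constant: it depends on $n$ through the reference strategy, it is not bounded below uniformly in $n$ by any positive quantity, and the entire point of Theorem~\ref{thm:asymptotic_last_iterate_converge_of_RTRM} is that it tends to zero. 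Absorbing it makes $C$ depend on $n$ and renders \eqref{eq:bound_reference_saddle} --- whose role is precisely to supply an $n$-uniform decrement for the monotone-convergence argument --- vacuous. The repair, which is what the paper does, is to avoid paying the diameter: test the RT variational inequality at the specific point $\boldsymbol{x}=\boldsymbol{x}^*$ rather than maximizing over $\mathcal{X}^\epsilon$, which gives $\text{Exp}(\boldsymbol{x}^{*,n}) \leq \mu\langle \boldsymbol{x}^*-\boldsymbol{x}^{*,n},\,\boldsymbol{x}^{*,n}-\boldsymbol{x}^{r,n}\rangle \leq \mu\,\|\boldsymbol{x}^*-\boldsymbol{x}^{*,n}\|_2\,\|\boldsymbol{x}^{*,n}-\boldsymbol{x}^{r,n}\|_2$ by Cauchy--Schwarz. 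In the resulting chain $C\|\boldsymbol{x}^*-\boldsymbol{x}^{*,n}\|_2 \leq \text{Exp}(\boldsymbol{x}^{*,n}) \leq \mu\|\boldsymbol{x}^*-\boldsymbol{x}^{*,n}\|_2\|\boldsymbol{x}^{*,n}-\boldsymbol{x}^{r,n}\|_2$ the troublesome factor appears on both sides and cancels (this is where $\boldsymbol{x}^{*,n}\neq\boldsymbol{x}^*$ is actually used), leaving $\|\boldsymbol{x}^{*,n}-\boldsymbol{x}^{r,n}\|_2 \geq C/\mu$ with $C$ depending only on the game; substituting into your polarization identity and factoring $a^2-b^2=(a-b)(a+b)$ then yields \eqref{eq:bound_reference_saddle}. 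This cancellation is the one idea your sketch is missing.
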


\begin{proof}
Let \(\boldsymbol{x}^{*,n}\) be the saddle point of the \(n\)-th RT-BSPP, and \(\boldsymbol{x}^*\) the NE of the \(\epsilon\)-perturbed game \(\Gamma^\epsilon\). By the saddle-point property, for player \(i \in \{1, 2\}\):
\[
    \langle \boldsymbol{x}_i^*, \boldsymbol{\ell}_i^{*,n} \rangle \geq \langle \boldsymbol{x}_i^{*,n}, \boldsymbol{\ell}_i^{*,n} \rangle,
\]
where \(\boldsymbol{\ell}_i^{*,n} = \boldsymbol{g}_i^{*,n} + \mu (\boldsymbol{x}_i^{*,n} - \boldsymbol{x}_i^{r,n})\), and \(\boldsymbol{g}_i^{*,n} = -\boldsymbol{U}_i \boldsymbol{x}_{-i}^{*,n}\) is the loss gradient for player \(i\) given the opponent’s strategy \(\boldsymbol{x}_{-i}^{*,n}\).

Summing over both players:
\[
    \sum_{i \in \{1, 2\}} \langle \boldsymbol{x}_i^*, \boldsymbol{g}_i^{*,n} + \mu (\boldsymbol{x}_i^{*,n} - \boldsymbol{x}_i^{r,n}) \rangle \geq \sum_{i \in \{1, 2\}} \langle \boldsymbol{x}_i^{*,n}, \boldsymbol{g}_i^{*,n} + \mu (\boldsymbol{x}_i^{*,n} - \boldsymbol{x}_i^{r,n}) \rangle.
\]
Rearranging terms:
\begin{align}
    \mu \langle \boldsymbol{x}^*, \boldsymbol{x}^{*,n} - \boldsymbol{x}^{r,n} \rangle &\geq \mu \langle \boldsymbol{x}^{*,n}, \boldsymbol{x}^{*,n} - \boldsymbol{x}^{r,n} \rangle + \sum_{i \in \{1, 2\}} \left( \langle \boldsymbol{x}_i^{*,n}, \boldsymbol{g}_i^{*,n} \rangle - \langle \boldsymbol{x}_i^*, \boldsymbol{g}_i^{*,n} \rangle \right). \label{eq:convert_to_exp}
\end{align}
Since \(\sum_{i \in \{1, 2\}} \left( \langle \boldsymbol{x}_i^{*,n}, \boldsymbol{g}_i^{*,n} \rangle - \langle \boldsymbol{x}_i^*, \boldsymbol{g}_i^{*,n} \rangle \right) = \text{Exp}(\boldsymbol{x}^{*,n}) \geq 0\), we have:
\[
    \mu \langle \boldsymbol{x}^*, \boldsymbol{x}^{*,n} - \boldsymbol{x}^{r,n} \rangle \geq \mu \langle \boldsymbol{x}^{*,n}, \boldsymbol{x}^{*,n} - \boldsymbol{x}^{r,n} \rangle.
\]
This simplifies to:
\begin{align}
    \langle \boldsymbol{x}^{*,n} - \boldsymbol{x}^{r,n}, \boldsymbol{x}^* - \boldsymbol{x}^{*,n} \rangle &\geq \langle \boldsymbol{x}^{*,n} - \boldsymbol{x}^{r,n}, \boldsymbol{x}^* - \boldsymbol{x}^{r,n} \rangle. \label{eq:q1}
\end{align}
Adding \(\langle \boldsymbol{x}^{r,n} - \boldsymbol{x}^*, \boldsymbol{x}^* - \boldsymbol{x}^{*,n} \rangle\) to both sides:
\[
    \langle \boldsymbol{x}^{*,n} - \boldsymbol{x}^{r,n}, \boldsymbol{x}^* - \boldsymbol{x}^{*,n} \rangle + \langle \boldsymbol{x}^{r,n} - \boldsymbol{x}^*, \boldsymbol{x}^* - \boldsymbol{x}^{*,n} \rangle \geq \langle \boldsymbol{x}^{*,n} - \boldsymbol{x}^{r,n}, \boldsymbol{x}^* - \boldsymbol{x}^{r,n} \rangle + \langle \boldsymbol{x}^{r,n} - \boldsymbol{x}^*, \boldsymbol{x}^* - \boldsymbol{x}^{*,n} \rangle.
\]
This yields:
\[
    \langle \boldsymbol{x}^{*,n} - \boldsymbol{x}^*, \boldsymbol{x}^* - \boldsymbol{x}^{*,n} \rangle \geq \langle \boldsymbol{x}^{r,n} - \boldsymbol{x}^*, \boldsymbol{x}^* - \boldsymbol{x}^{r,n} \rangle,
\]
or equivalently:
\[
    \|\boldsymbol{x}^* - \boldsymbol{x}^{*,n}\|_2^2 \leq \|\boldsymbol{x}^* - \boldsymbol{x}^{r,n}\|_2^2 - \|\boldsymbol{x}^{r,n} - \boldsymbol{x}^{*,n}\|_2^2,
\]
proving Equation~\eqref{eq:saddle_reference_NE}. Assuming \(\boldsymbol{x}^{r,n} \neq \boldsymbol{x}^{*,n} \neq \boldsymbol{x}^*\), we apply the identity \(a^2 - b^2 = (a - b)(a + b)\) to obtain:
\[
    \|\boldsymbol{x}^* - \boldsymbol{x}^{r,n}\|_2^2 - \|\boldsymbol{x}^* - \boldsymbol{x}^{*,n}\|_2^2 = \left( \|\boldsymbol{x}^* - \boldsymbol{x}^{r,n}\|_2 - \|\boldsymbol{x}^* - \boldsymbol{x}^{*,n}\|_2 \right) \left( \|\boldsymbol{x}^* - \boldsymbol{x}^{r,n}\|_2 + \|\boldsymbol{x}^* - \boldsymbol{x}^{*,n}\|_2 \right).
\]
Thus:
\[
    \|\boldsymbol{x}^* - \boldsymbol{x}^{r,n}\|_2 - \|\boldsymbol{x}^* - \boldsymbol{x}^{*,n}\|_2 \geq \frac{\|\boldsymbol{x}^{r,n} - \boldsymbol{x}^{*,n}\|_2^2}{\|\boldsymbol{x}^* - \boldsymbol{x}^{r,n}\|_2 + \|\boldsymbol{x}^* - \boldsymbol{x}^{*,n}\|_2}. \label{eq:relation_2}
\]

To bound \(\|\boldsymbol{x}^{r,n} - \boldsymbol{x}^{*,n}\|_2\), return to Equation~\eqref{eq:convert_to_exp}:
\[
    \text{Exp}(\boldsymbol{x}^{*,n}) \leq \mu \langle \boldsymbol{x}^* - \boldsymbol{x}^{*,n}, \boldsymbol{x}^{*,n} - \boldsymbol{x}^{r,n} \rangle.
\]
By the Cauchy-Schwarz inequality:
\[
    \text{Exp}(\boldsymbol{x}^{*,n}) \leq \mu \|\boldsymbol{x}^* - \boldsymbol{x}^{*,n}\|_2 \|\boldsymbol{x}^{*,n} - \boldsymbol{x}^{r,n}\|_2. \label{eq:exp_saddle}
\]
By Lemma~\ref{lemma:MS}, there exists a constant \(C > 0\) such that:
\[
    C \|\boldsymbol{x}^* - \boldsymbol{x}^{*,n}\|_2 \leq \text{Exp}(\boldsymbol{x}^{*,n}) \leq \mu \|\boldsymbol{x}^* - \boldsymbol{x}^{*,n}\|_2 \|\boldsymbol{x}^{*,n} - \boldsymbol{x}^{r,n}\|_2.
\]
Assuming \(\boldsymbol{x}^* \neq \boldsymbol{x}^{*,n}\), we derive:
\[
    \|\boldsymbol{x}^{*,n} - \boldsymbol{x}^{r,n}\|_2 \geq \frac{C}{\mu}.
\]
Substituting into Equation~\eqref{eq:relation_2}:
\[
    \|\boldsymbol{x}^* - \boldsymbol{x}^{r,n}\|_2 - \|\boldsymbol{x}^* - \boldsymbol{x}^{*,n}\|_2 \geq \frac{C^2}{\mu^2 (\|\boldsymbol{x}^* - \boldsymbol{x}^{r,n}\|_2 + \|\boldsymbol{x}^* - \boldsymbol{x}^{*,n}\|_2)},
\]
proving Equation~\eqref{eq:bound_reference_saddle}.
\end{proof}

We now establish the asymptotic convergence of RTRM. Let \(\boldsymbol{x}^{t,n}\) denote the strategy at iteration \(t\) of the \(n\)-th RT-BSPP, \(\boldsymbol{x}^{*,n}\) its saddle point, and \(\boldsymbol{x}^*\) the NE of \(\Gamma^\epsilon\). By the triangle inequality:
\[
    \|\boldsymbol{x}^* - \boldsymbol{x}^{t,n}\|_2 \leq \|\boldsymbol{x}^* - \boldsymbol{x}^{*,n}\|_2 + \|\boldsymbol{x}^{*,n} - \boldsymbol{x}^{t,n}\|_2. \label{eq:bound}
\]
Here, \(\|\boldsymbol{x}^* - \boldsymbol{x}^{*,n}\|_2\) measures the distance between the NE and the \(n\)-th saddle point, and \(\|\boldsymbol{x}^{*,n} - \boldsymbol{x}^{t,n}\|_2\) captures the error within the \(n\)-th RT-BSPP.

By the RT framework initialization (Equation~\eqref{eq:rt_init}), \(\boldsymbol{x}^{r,n} = \boldsymbol{x}^{T+1,n-1} = \boldsymbol{x}^{*,n-1}\). Substituting into Equation~\eqref{eq:bound_reference_saddle}:
\[
    \|\boldsymbol{x}^* - \boldsymbol{x}^{*,n-1}\|_2 - \|\boldsymbol{x}^* - \boldsymbol{x}^{*,n}\|_2 \geq \frac{C^2}{\mu^2 (\|\boldsymbol{x}^* - \boldsymbol{x}^{*,n-1}\|_2 + \|\boldsymbol{x}^* - \boldsymbol{x}^{*,n}\|_2)}. \label{eq:difference_reference}
\]
This implies that the sequence \(\{\|\boldsymbol{x}^* - \boldsymbol{x}^{*,n}\|_2\}\) is strictly decreasing and non-negative. Since the right-hand side of Equation~\eqref{eq:difference_reference} is positive and increases as \(\boldsymbol{x}^{*,n-1}\) approaches \(\boldsymbol{x}^*\), the sequence converges to a limit \(L = 0\) by the monotone convergence theorem. Meanwhile, Theorem~\ref{thm:best_iterate_converge_of_RTRM} guarantees that \(\|\boldsymbol{x}^{*,n} - \boldsymbol{x}^{t,n}\|_2 \to 0\) as \(t \to \infty\). Thus:
\[
    \lim_{n \to \infty, t \to \infty} \|\boldsymbol{x}^* - \boldsymbol{x}^{t,n}\|_2 = 0,
\]
establishing the asymptotic last-iterate convergence of RTRM to the NE \(\boldsymbol{x}^*\) of \(\Gamma^\epsilon\).
\end{proof}

\section{Proof of Theorem~\ref{thm:best_iterate_convergence_of_RTCFR}}\label{sec:proof_RTCFR_convergence}

\begin{proof}
In an extensive-form game (EFG), Reward Transformation Counterfactual Regret Minimization (RTCFR) optimizes sequence-form strategies using a bottom-up approach, analogous to Counterfactual Regret Minimization (CFR). We establish best-iterate convergence of the sequence-form strategy \(\boldsymbol{q}^{t,n}\) to the saddle point \(\boldsymbol{q}^{*,n}\) of the \(\epsilon\)-perturbed \(n\)-th RT-BSPP.

For each information set \(I \in \mathcal{I}_i\) of player \(i \in \{1, 2\}\), RTCFR updates the local behavioral strategy \(\boldsymbol{x}^{t,n}(I)\) using dynamics similar to RTRM (Equations~\eqref{eq:RTRM_v}--\eqref{eq:RTRM_x}). By Theorem~\ref{thm:best_iterate_converge_of_RTRM}, the local convergence at each information set is bounded as:
\begin{equation}
    \|\boldsymbol{x}^{*,n}(I) - \boldsymbol{x}^{t,n}(I)\|_2 \leq \sqrt{\frac{2 C_2^{\max}}{C_1^{\min} T}},
    \label{eq:bound_local}
\end{equation}
where \(C_1^{\min} = \min_{I \in \mathcal{I}_1 \cup \mathcal{I}_2} C_1^I\), \(C_2^{\max} = \max_{I \in \mathcal{I}_1 \cup \mathcal{I}_2} C_2^I\), and \(C_1^I, C_2^I\) are constants defined in Theorem~\ref{thm:best_iterate_converge_of_RTRM} for information set \(I\).

Let \(\mathcal{Q} \subseteq \mathbb{R}_{\geq 0}^{|\mathcal{X}|}\) denote the sequence-form strategy space, satisfying \(\boldsymbol{q}[\emptyset] = 1\) and \(\sum_{a \in A(I)} \boldsymbol{q}[Ia] = \boldsymbol{q}[pI]\) for all \(I \in \mathcal{I}_i\), where \(pI\) is the parent sequence of \(I\). Define \(M_{\mathcal{Q}} = \max_{\boldsymbol{q} \in \mathcal{Q}} \|\boldsymbol{q}\|_1\), representing the maximum number of information sets with non-zero reach probability under a pure strategy. We define the distance between the saddle-point sequence-form strategy \(\boldsymbol{q}^{*,n}\) and the iterate \(\boldsymbol{q}^{t,n}\) as:
\begin{equation}
    \|\boldsymbol{q}^{*,n} - \boldsymbol{q}^{t,n}\|_2 = \sqrt{\sum_{I \in \mathcal{I}_1 \cup \mathcal{I}_2} \boldsymbol{q}^{*,n}[pI]^2 \|\boldsymbol{x}^{*,n}(I) - \boldsymbol{x}^{t,n}(I)\|_2^2},
    \label{eq:sequence_norm}
\end{equation}
where \(\boldsymbol{x}^{*,n}(I), \boldsymbol{x}^{t,n}(I) \in \Delta^{|A(I)|}\) are the behavioral strategies at information set \(I\), and \(\boldsymbol{q}^{*,n}[pI]\) is the reach probability of \(I\).

Using Equation~\eqref{eq:bound_local}, we bound the squared distance and summing over all information sets:
\[
    \sum_{I \in \mathcal{I}_1 \cup \mathcal{I}_2} \boldsymbol{q}^{*,n}[pI]^2 \|\boldsymbol{x}^{*,n}(I) - \boldsymbol{x}^{t,n}(I)\|_2^2 \leq \sum_{I \in \mathcal{I}_1 \cup \mathcal{I}_2} \boldsymbol{q}^{*,n}[pI]^2 \cdot \frac{2 C_2^{\max}}{C_1^{\min} T}.
\]
Since \(\sum_{I \in \mathcal{I}_1 \cup \mathcal{I}_2} \boldsymbol{q}^{*,n}[pI]^2 \leq M_{\mathcal{Q}}\) (as reach probabilities are non-negative and less than 1, bounded by the structure of the game tree), we have:
\[
    \|\boldsymbol{q}^{*,n} - \boldsymbol{q}^{t,n}\|_2^2 = \sum_{I \in \mathcal{I}_1 \cup \mathcal{I}_2} \boldsymbol{q}^{*,n}[pI]^2 \|\boldsymbol{x}^{*,n}(I) - \boldsymbol{x}^{t,n}(I)\|_2^2 \leq M_{\mathcal{Q}} \cdot \frac{2 C_2^{\max}}{C_1^{\min} T}.
\]
Taking the square root:
\[
    \|\boldsymbol{q}^{*,n} - \boldsymbol{q}^{t,n}\|_2 \leq \sqrt{M_{\mathcal{Q}} \cdot \frac{2 C_2^{\max}}{C_1^{\min} T}}.
\]
establishing the \(O(1/\sqrt{T})\) best-iterate convergence rate to the saddle point \(\boldsymbol{q}^{*,n}\) in the \(\epsilon\)-perturbed \(n\)-th RT-BSPP.
\end{proof}

\section{Proof of Theorem~\ref{thm:asymptotic_last_iterate_converge_of_RTCFR}}\label{sec:proof_asymptotic_RTCFR}

\begin{proof}
Let \(\boldsymbol{q}^* \in \mathcal{Q}^*\) be the NE of the \(\epsilon\)-perturbed game \(\Gamma^\epsilon\), and let \(\boldsymbol{q}^{*,n} \in \mathcal{Q}^{*,n}\) and \(\boldsymbol{q}^{r,n}\) be the saddle point and reference strategy of the \(n\)-th Reward Transformation Bilinear Saddle-Point Problem (RT-BSPP), respectively. We prove that the sequence \(\{\boldsymbol{q}^{*,n}\}\) converges to \(\boldsymbol{q}^*\), and combined with Theorem~\ref{thm:best_iterate_convergence_of_RTCFR}, the RTCFR strategy \(\boldsymbol{q}^{t,n}\) asymptotically converges to \(\boldsymbol{q}^*\).

For each information set \(I \in \mathcal{I}_i\) of player \(i \in \{1, 2\}\), let \(\triangle_I\) denotes the subtree rooted at \(I\), the saddle-point property of the \(n\)-th RT-BSPP implies:
\begin{equation}
    \boldsymbol{q}_i^{*,n}[pI] \langle \boldsymbol{x}_i^*(I), \boldsymbol{\ell}_i^{*,(*,n)}(I) \rangle \geq \boldsymbol{q}_i^{*,n}[pI] \langle \boldsymbol{x}_i^{*,n}(I), \boldsymbol{\ell}_i^{*,n}(I) \rangle,
    \label{eq:laminar_saddle_property}
\end{equation}
where \(\boldsymbol{x}_i^*(I), \boldsymbol{x}_i^{*,n}(I)\) are the behavioral strategies at \(I\), \(\boldsymbol{q}_i^{*,n}[pI]\) is the reach probability of \(I\), and the losses are:
\[
    \boldsymbol{\ell}_i^{*,(*,n)}(I) = \boldsymbol{g}_i^{*,n}(I) + \mu \left( \boldsymbol{x}_i^{*,n}(I) - \boldsymbol{x}_i^{r,n}(I) \right) + \left( \sum_{I' \in C(I,a)} V_{\triangle_{I'}}^{*,n}(\boldsymbol{x}_i^*(\triangle_{I'})) \right)_{a \in A(I)},
\]
\[
    \boldsymbol{\ell}_i^{*,n}(I) = \boldsymbol{g}_i^{*,n}(I) + \mu \left( \boldsymbol{x}_i^{*,n}(I) - \boldsymbol{x}_i^{r,n}(I) \right) + \left( \sum_{I' \in C(I,a)} V_{\triangle_{I'}}^{*,n}(\boldsymbol{x}_i^{*,n}(\triangle_{I'})) \right)_{a \in A(I)},
\]
with \(\boldsymbol{g}_i^{*,n}(I) = -\boldsymbol{U}_i(I) \boldsymbol{q}_{-i}^{*,n}\) as the loss gradient, and the subtree value defined recursively as:
\[
    V_{\triangle_I}^{*,n}(\boldsymbol{x}_i) = \langle \boldsymbol{x}_i(I), \boldsymbol{g}_i^{*,n}(I) \rangle + \sum_{a \in A(I)} \boldsymbol{x}_i[Ia] \sum_{I' \in C(I,a)} V_{\triangle_{I'}}^{*,n}(\boldsymbol{x}_i(\triangle_{I'})).
\]

Extending Equation~\eqref{eq:laminar_saddle_property}, we obtain:
    \begin{align}
        \langle \boldsymbol{q}_i^*(\triangle_I), -\boldsymbol{U}_i(\triangle_I) \boldsymbol{q}_{-i}^{*,n} \rangle + \langle \boldsymbol{x}_i^*(I), \mu (\boldsymbol{x}_i^{*,n}(I) - \boldsymbol{x}_i^{r,n}(I)) \rangle \geq \langle \boldsymbol{q}_i^{*,n}(\triangle_I), -\boldsymbol{U}_i(\triangle_I) \boldsymbol{q}_{-i}^{*,n} \rangle 
        + \boldsymbol{q}_i^{*,n}[pI] \langle \boldsymbol{x}_i^{*,n}(I), \mu (\boldsymbol{x}_i^{*,n}(I) - \boldsymbol{x}_i^{r,n}(I)) \rangle.
        \label{eq:extended_saddle}
    \end{align}
    Summing over subtrees rooted at information sets \(I\) for players \(i \in \{1, 2\}\), we have:
    \[
        \sum_{i \in \{1, 2\}} \langle \boldsymbol{q}_i^*(\triangle_I), -\boldsymbol{U}_i(\triangle_I) \boldsymbol{q}_{-i}^{*,n} \rangle \leq \sum_{i \in \{1, 2\}} \langle \boldsymbol{q}_i^{*,n}(\triangle_I), -\boldsymbol{U}_i(\triangle_I) \boldsymbol{q}_{-i}^{*,n} \rangle,
    \]
    yielding, for each information set:
    \[
        \boldsymbol{q}_i^{*,n}[pI] \langle \boldsymbol{x}_i^*(I), \mu (\boldsymbol{x}_i^{*,n}(I) - \boldsymbol{x}_i^{r,n}(I)) \rangle \geq \boldsymbol{q}_i^{*,n}[pI] \langle \boldsymbol{x}_i^{*,n}(I), \mu (\boldsymbol{x}_i^{*,n}(I) - \boldsymbol{x}_i^{r,n}(I)) \rangle.
    \]
Following the proof of Lemma~\ref{le:relation_of_3_points}, for each information set \(I \in \mathcal{I}_i\), we derive:
\[
    \boldsymbol{q}_i^{*,n}[pI] \left( \|\boldsymbol{x}_i^*(I) - \boldsymbol{x}_i^{r,n}(I)\|_2^2 - \|\boldsymbol{x}_i^*(I) - \boldsymbol{x}_i^{*,n}(I)\|_2^2 \right) \geq 0,
\]
implying:
\[
    \|\boldsymbol{x}_i^*(I) - \boldsymbol{x}_i^{*,n}(I)\|_2 \leq \|\boldsymbol{x}_i^*(I) - \boldsymbol{x}_i^{r,n}(I)\|_2.
\]
Applying Lemma~\ref{le:relation_of_3_points}, assuming \(\boldsymbol{x}_i^{r,n}(I) \neq \boldsymbol{x}_i^{*,n}(I) \neq \boldsymbol{x}_i^*(I)\), we obtain:
\[
    \boldsymbol{q}_i^{*,n}[pI] \left( \|\boldsymbol{x}_i^*(I) - \boldsymbol{x}_i^{r,n}(I)\|_2 - \|\boldsymbol{x}_i^*(I) - \boldsymbol{x}_i^{*,n}(I)\|_2 \right) \geq \boldsymbol{q}_i^{*,n}[pI] \frac{C_I^2}{\mu^2 \left( \|\boldsymbol{x}_i^*(I) - \boldsymbol{x}_i^{r,n}(I)\|_2 + \|\boldsymbol{x}_i^*(I) - \boldsymbol{x}_i^{*,n}(I)\|_2 \right)},
\]
where \(C_I > 0\) is a constant specific to information set \(I\). Define \(C_{\min} = \min_{I \in \mathcal{I}_1 \cup \mathcal{I}_2} \frac{C_I^2}{\mu^2 \left( \|\boldsymbol{x}_i^*(I) - \boldsymbol{x}_i^{r,n}(I)\|_2 + \|\boldsymbol{x}_i^*(I) - \boldsymbol{x}_i^{*,n}(I)\|_2 \right)}\).

For the sequence-form strategy, define the \(\ell_2\)-norm distance as:
\[
    \|\boldsymbol{q}^* - \boldsymbol{q}^{*,n}\|_2 = \sqrt{\sum_{i \in \{1, 2\}} \sum_{I \in \mathcal{I}_i} \boldsymbol{q}_i^{*,n}[pI]^2 \|\boldsymbol{x}_i^*(I) - \boldsymbol{x}_i^{*,n}(I)\|_2^2}.
\]
Similarly, \(\|\boldsymbol{q}^* - \boldsymbol{q}^{r,n}\|_2 = \sqrt{\sum_{i \in \{1, 2\}} \sum_{I \in \mathcal{I}_i} \boldsymbol{q}_i^{*,n}[pI]^2 \|\boldsymbol{x}_i^*(I) - \boldsymbol{x}_i^{r,n}(I)\|_2^2}\). By the RT framework initialization (Equation~\eqref{eq:rt_init}), \(\boldsymbol{q}^{r,n} = \boldsymbol{q}^{T+1,n-1} = \boldsymbol{q}^{*,n-1}\). Thus:
\[
    \|\boldsymbol{q}^* - \boldsymbol{q}^{*,n-1}\|_2^2 - \|\boldsymbol{q}^* - \boldsymbol{q}^{*,n}\|_2^2 = \sum_{i \in \{1, 2\}} \sum_{I \in \mathcal{I}_i} \boldsymbol{q}_i^{*,n}[pI]^2 \left( \|\boldsymbol{x}_i^*(I) - \boldsymbol{x}_i^{*,n-1}(I)\|_2^2 - \|\boldsymbol{x}_i^*(I) - \boldsymbol{x}_i^{*,n}(I)\|_2^2 \right).
\]
Using the identity \(a^2 - b^2 = (a - b)(a + b)\), we obtain:
\[
    \|\boldsymbol{q}^* - \boldsymbol{q}^{*,n-1}\|_2^2 - \|\boldsymbol{q}^* - \boldsymbol{q}^{*,n}\|_2^2 \geq \sum_{i \in \{1, 2\}} \sum_{I \in \mathcal{I}_i} \boldsymbol{q}_i^{*,n}[pI]^2 \cdot \frac{C_I^2}{\mu^2 \left( \|\boldsymbol{x}_i^*(I) - \boldsymbol{x}_i^{*,n-1}(I)\|_2 + \|\boldsymbol{x}_i^*(I) - \boldsymbol{x}_i^{*,n}(I)\|_2 \right)} \geq M_{\mathcal{Q}} C_{\min},
\]
where \(M_{\mathcal{Q}} = \max_{\boldsymbol{q} \in \mathcal{Q}} \|\boldsymbol{q}\|_1\). Since \(\sum_{i \in \{1, 2\}} \sum_{I \in \mathcal{I}_i} \boldsymbol{q}_i^{*,n}[pI]^2 \leq M_{\mathcal{Q}}\), the sequence \(\{\|\boldsymbol{q}^* - \boldsymbol{q}^{*,n}\|_2\}\) is strictly decreasing and non-negative. By the monotone convergence theorem, it converges to zero, so \(\boldsymbol{q}^{*,n} \to \boldsymbol{q}^*\).

By Theorem~\ref{thm:best_iterate_convergence_of_RTCFR}, there exists \(t \in \{1, \dots, T\}\) such that \(\|\boldsymbol{q}^{*,n} - \boldsymbol{q}^{t,n}\|_2 \leq O(1/\sqrt{T})\), which vanishes as \(T \to \infty\). Thus:
\[
    \lim_{n \to \infty, T \to \infty} \|\boldsymbol{q}^* - \boldsymbol{q}^{t,n}\|_2 \leq \lim_{n \to \infty} \|\boldsymbol{q}^* - \boldsymbol{q}^{*,n}\|_2 + \lim_{T \to \infty} \|\boldsymbol{q}^{*,n} - \boldsymbol{q}^{t,n}\|_2 = 0,
\]
establishing the asymptotic last-iterate convergence of RTCFR to the \(\epsilon\)-EFPE \(\boldsymbol{q}^*\).
\end{proof}

\section{Experimental Settings}
\subsection{Game Descriptions}\label{a:game_description}

\textbf{Kuhn Poker} \cite{kuhn1950simplified} is a simplified poker variant with a deck of \(n\) cards (e.g., King, Queen, Jack for \(n=3\)). Each player receives one card, with the rest hidden. Players can check, raise, call, or fold, and the highest card wins the pot. Larger \(n\) increases strategic complexity.

\textbf{Leduc Poker} \cite{southey2012bayes} uses \(2n\) cards (e.g., two Kings, Queens, Jacks for \(n=3\)). Each player gets a private card, and a public card is revealed after the first betting round. A second betting round follows, and the player with a card matching the public card’s rank wins; otherwise, the highest card prevails.

\textbf{Goofspiel} \cite{ross1971goofspiel} involves three decks of \(n\) cards (values 1 to \(n\)). Each player receives one deck, and the third (“prize”) deck is shuffled face-down. Each turn, the top prize card is revealed, players bid with a card from their hand, and the highest bid wins the prize. Ties split the prize. The score is the difference in prize values won.

\textbf{Liar’s Dice} \cite{lisy2015online} starts with each player rolling a \(n\)-sided dice privately. Players alternate claiming a number (1 to \(n\)) and a minimum quantity of dice showing that number across all rolls. A player can raise the claim or challenge the prior claim. A false claim awards +1 to the challenger and -1 to the claimant; a true claim reverses this.

\subsection{Detailed Experimental Settings}\label{a:exp_setting}

For Reg-OMWU (0), Reg-OMWU (0.001), and Reg-OMWU (adp), step sizes (\(\eta\)) were tuned via logarithmic grid search from \(10^{-5}\) to 1 over 20 points, as shown in Table \ref{tab:omwu_par_in_EFG}. OMWU and EGT algorithms employ the dilated technique \cite{hoda2010smoothing} for bottom-up optimization of behavior strategies, with regularizer weights \(\alpha_I = 2 + 2 \max_{a \in A(I)} \sum_{j \in C(I,a)} \alpha_j\) for information set \(I \in \mathcal{I}_i\), where \(C(I,a)\) denotes child information sets reached via action \(a\) \cite{kroer2017theoretical}.

\begin{table}[ht]
    \centering
    \caption{Hyperparameters (\(\eta\)) for OMWU variants in EFGs}
    \begin{tabular}{lccccccc}
        \toprule
        Game & Kuhn (3) & Leduc (3) & Leduc (5) & Goofspiel (3) & Goofspiel (4) & Liar’s Dice (5) & Liar’s Dice (6) \\
        \midrule
        \(\eta\) & 0.0885 & 0.0263 & 0.0042 & 0.0023 & 0.0012 & 0.0001 & \(3.36 \times 10^{-5}\) \\
        \bottomrule
    \end{tabular}
    \label{tab:omwu_par_in_EFG}
\end{table}

Reg-CFR \cite{liu2022power}, using Dual Stabilized Optimistic Mirror Descent \cite{hsieh2021adaptive}, was tested but performed poorly due to its \(O(\sqrt{t})\) step size growth, inefficient for large game trees. We adopted Reg-OMWU \cite{liu2022power} instead, which uses OMWU for local optimization.

For Reg-OMWU (adp), we modified the adaptive regularization and perturbation from \cite{bernasconi2024learning}. The original method required \(10^7\) iterations and used an inefficient initial perturbation (\(1 - 10^{-4}\)). We set the initial regularization weight to \(10^{-5}\), halving it when exploitability decreases by 0.245, and tuned perturbations from 0.1 to 0.001 via a 100-point logarithmic grid, ensuring the regularization weight remains smaller than the perturbation for effective EFPE computation.

\end{document}